\definecolor{light-gray}{gray}{0.95}
\newtcbox{\mymath}[1][]{%
    nobeforeafter, math upper, tcbox raise base,
    enhanced, colframe=blue!30!black,
    colback=blue!30, boxrule=1pt,
    #1}
\newtheorem{problem}{Problem}
\newcommand\myeq{\mathrel{\stackrel{\makebox[0pt]{\mbox{\normalfont\tiny def}}}{=}}}
\newtheorem{proposition}{Proposition}
\newcommand{\diag}{\mathop\mathrm{diag}}
  \DeclareMathOperator{\Tr}{Tr}
\newcommand*{\rom}[1]{\expandafter\@slowromancap\romannumeral #1@}
\newcommand{\hide}[1]{} 
\newtheorem{claim}{Claim}
\newenvironment{proof}{\par\noindent\textit{Proof.}}{$\Box$\par\bigskip\par}
\newtheorem{theorem}{Theorem}
\newtheorem{observation}{Observation}
\newtheorem{corollary}{Corollary}
\newcommand{\beql}[1]{\begin{equation}\label{#1}}
\newcommand{\beq}[1]{\begin{equation}\label{#1}}
\newcommand{\eeq}{\end{equation}}
\newcommand{\field}[1]{\mathbb{#1}} % requires amsfonts
\newcommand{\spara}[1]{\smallskip\noindent{\bf #1}}
\newcommand{\squishlist}{
 \begin{list}{$\bullet$}
  {  \setlength{\itemsep}{0pt}
     \setlength{\parsep}{3pt}
     \setlength{\topsep}{3pt}
     \setlength{\partopsep}{0pt}
     \setlength{\leftmargin}{2em}
     \setlength{\labelwidth}{1.5em}
     \setlength{\labelsep}{0.5em}
} }
\newcommand{\squishlisttight}{
 \begin{list}{$\bullet$}
  { \setlength{\itemsep}{0pt}
    \setlength{\parsep}{0pt}
    \setlength{\topsep}{0pt}
    \setlength{\partopsep}{0pt}
    \setlength{\leftmargin}{2em}
    \setlength{\labelwidth}{1.5em}
    \setlength{\labelsep}{0.5em}
} }
\newcommand{\squishdesc}{
 \begin{list}{}
  {  \setlength{\itemsep}{0pt}
     \setlength{\parsep}{3pt}
     \setlength{\topsep}{3pt}
     \setlength{\partopsep}{0pt}
     \setlength{\leftmargin}{1em}
     \setlength{\labelwidth}{1.5em}
     \setlength{\labelsep}{0.5em}
} }
\newcommand{\squishend}{
  \end{list}
}
\newcommand{\squishlistt}{
 \begin{list}{---}
  {  \setlength{\itemsep}{0pt}
     \setlength{\parsep}{3pt}
     \setlength{\topsep}{3pt}
     \setlength{\partopsep}{0pt}
     \setlength{\leftmargin}{2em}
     \setlength{\labelwidth}{1.5em}
     \setlength{\labelsep}{0.5em}
} }
\begin{document}

\title{Minimizing Polarization and Disagreement in Social Networks}

\author{
Cameron Musco \\ MIT \\ cnmusco@mit.edu
\and Christopher Musco \\ MIT \\ cpmusco@mit.edu
\and  Charalampos E. Tsourakakis \\ Boston University \\ ctsourak@bu.edu
}

\date{}
\begin{titlepage}
\maketitle
 % !TEX root = arxiv.tex
\begin{abstract}
The rise of social media and online social networks has been a disruptive force in society. Opinions are increasingly shaped by interactions on online social media, and social phenomena including disagreement and polarization are now tightly woven into everyday life.  In this work we initiate the study of the following question: 

 \begin{quotation}
\noindent Given $n$ agents, each with its own initial opinion that reflects its core value on a topic, and an opinion dynamics model, what is the structure of a social network that minimizes {\em polarization} and {\em disagreement} simultaneously?
\end{quotation}
  
\noindent This question is  central to recommender systems: should a recommender system prefer a link suggestion between two online users with similar mindsets in order to keep disagreement low, or between two users with different opinions in order to expose each to the other's viewpoint of the world, and decrease overall levels of polarization?  Such decisions have an important global effect on society \cite{williams2007social}.   Our contributions  include a mathematical  formalization of  this question as an optimization problem and an exact, time-efficient algorithm. We also prove that there always exists a network with $O(n/\epsilon^2)$ edges that is a $(1+\epsilon)$ approximation to the optimum. Our formulation is an instance of optimization over {\em graph topologies}, also considered e.g., in \cite{boyd2004fastest,daitch2009fitting,sun2006fastest}. For a fixed graph, we additionally show how to optimize our objective function over the agents' innate opinions in polynomial time.  

We perform an empirical study  of our proposed methods on synthetic and real-world data that verify their value as mining tools to better understand the trade-off between  of disagreement and polarization.  We find that there is a lot of space to reduce both polarization and disagreement in real-world networks; for instance, on a Reddit network where users exchange comments on politics, our methods achieve a $\sim 60\,000$-fold reduction in  polarization and disagreement. Our code is available at  \url{https://github.com/tsourolampis/polarization-disagreement}.

\end{abstract}

\end{titlepage}

\section{Introduction}
\label{sec:intro}
 % !TEX root = arxiv.tex
The enormous popularity of social media and online social networks has led to fundamental changes in how humans share and form opinions.
%The rise of social media and online social networks has been a disruptive force in society.  
Several events have stirred  fierce debates online, such as the internal memo of James Damore at Google that leaked online and showed how society and users of social media become strongly polarized around controversial issues \cite{damore}. Social phenomena such as disagreement and polarization that have existed in human societies for millenia, are now taking place in an online virtual world,   with a huge  impact  on society.   Furthermore,  opinions are increasingly shaped by social media.  For this reason  Facebook's recommender systems were recently accused of influencing voters by propagating fake news and Russian ads during the U.S. national elections \cite{nytimes}, thus stirring additional controversy about the role of tech giants in society.  Google has faced similar accusations about influencing the outcome of elections \cite{sciencemag}.   

From a business perspective, recommender systems aim to maximize revenue -- some in the short-term, others in  the long-term -- by optimizing mathematical objectives  
related to  click-through rate and user engagement. However, given the increasing power of tech-giants like Facebook and Google, a better understanding of  the effect of recommended links on society is required \cite{nytimes2}.

The following facts motivate  our work. Valdis Krebs  analyzed purchasing trends on Amazon. He found that  during the 2008 presidential elections,  people who already supported Barack Obama tended to be the same people buying books that painted him in a positive light. Similarly, people who disliked him, bought books that painted him in a negative light \cite[Chapter 1]{kadushin2012understanding}. This bias is known as {\em confirmation bias} \cite{kahneman2011thinking}, and lies at the root of the spread of  conspiracy theories and fake news. Put simply, most humans  avoid challenging their opinions. Therefore, a recommender system trained on real-data, whose goal is to maximize revenue and increase user engagement, may naturally end up creating ``echo-chambers''. Equivalently, the recommended links minimize the {\em disagreement} that the user experiences. In the context of social networks, connections between users with similar mindsets are preferred over connections between users with different mindsets. 

On the other hand, minimizing disagreement leads to greater {\em polarization}. To see intuitively why this is the case, consider a topic with two prevalent opinions such as supporting either Democrat or Republican politics. As users connect to users with similar mindsets, two clusters with strong and extreme opinions are formed, leading to greater polarization between the two groups \cite{boxell2017internet}. This polarization has harmful effects on society; for example, reaching political consensus becomes much harder.   This trade-off between {\em disagreement} and {\em polarization} motivates our work. We are interested in understanding the structure of  a social network that minimizes both disagreement and polarization.  We introduce and study two key problems, summarized below. For both problems, we use as our underlying opinion dynamics model the  Friedkin-Johnsen   model \cite{friedkin1990social} that includes both disagreement and consensus, as it associates   each node with an innate 
opinion and an expressed opinion. For details see Section~\ref{sec:related}.

\spara{$\bullet$ Minimizing Polarization and Disagreement Over Graphs.} We initiate the study of the following important question in understanding the global effect of  recommended connections in social networks on society.   We formalize  this problem mathematically, and provide a polynomial time algorithm by proving that our optimization formulation is convex. Our formulation is an instance  of optimization over {\em graph topologies}. For other such optimization problems, see \cite{boyd2004fastest,daitch2009fitting,sun2006fastest}.

\begin{tcolorbox}
%\vspace{-.5em}
\begin{problem}
\label{prob1} 
Given $n$ agents, each with its own initial opinion that reflects its core value on a topic, and an opinion dynamics model, what is the structure of a connected social network with a given total edge weight that minimizes {\em polarization } and {\em disagreement} simultaneously?
\end{problem}
\end{tcolorbox}

\noindent  We additionally prove that there always exists a  graph with at most $O(n/\epsilon^2)$ edges that achieves a $(1+\epsilon)$-approximation to the above problem. That is, disagreement and polarization can be minimized by very sparsely connected networks.
  
\spara{$\bullet$  Minimizing Polarization and Disagreement Over Opinions.} We also study the following optimization problem. Here, the social network is given and we wish to modify the agents'  innate opinions  to minimize  polarization and disagreement at equilibrium. This problem aims to understand the effect of targeted ads or recommendations designed to influence innate opinions.

\begin{tcolorbox}
\begin{problem}
\label{prob3} 
%\vspace{-.5em}
\noindent Given a  network $G$ on $n$ agents, each with its own initial opinion, an opinion dynamics model, and a budget $\alpha>0$,  how should we change the initial opinions using total opinion mass at most $\alpha$ in order to minimize polarization and disagreement?
\end{problem}
\end{tcolorbox}

%\vspace{.5em}
\spara{$\bullet$ Experiments.} We evaluate our methods on synthetic and real data. Our findings indicate that our graph optimization methods  result in significantly lower levels of polarization and disagreement on Twitter and Reddit, and confirm that our proposed methods can be used as mining tools to better understand the effects of link recommendation. We also observe experimentally that existing graph topologies are far away from optimizing polarization and disagreement. For instance, our proposed method shows that we can obtain a $6.2 \times 10^4$-fold reduction in polarization and disagreement by optimizing the graph topology of our Reddit dataset.

\spara{Roadmap.}   The paper is organized as follows: Section~\ref{sec:related} presents related work.
Section~\ref{sec:proposed}  presents our algorithmic contributions.
Section~\ref{sec:exp} presents experimental findings on synthetic and real-world data.  Finally, Section~\ref{sec:concl} concludes the paper.

\spara{Notation.}  We use the following notation throughout our paper. Let $G(V,E,w)$ be a weighted connected undirected graph, with $V=[n]$ and  $|E|=m$.  Let $N(i)$ be the neighborhood of node $i$ and  $d(i)=\sum\limits_{j \in N(i)} w_{ij}$ be its degree.  Let $A$ be the adjacency matrix, and $L \myeq D-A$  be the combinatorial Laplacian. Here  $D=diag(d(1),\ldots,d(n))$ is a diagonal matrix with the degrees in its diagonal. Each agent $i \in V$ has an innate opinion $s_i \in [0,1]$. Let $s=(s_1,\ldots,s_n) \in [0,1]^n$ be the vector of innate opinions.   Finally, let $\vec{1},\vec{0}$ be the all-ones and all-zeros vectors respectively.

\section{Related Work}
\label{sec:related}
 % !TEX root = arxiv.tex
To the best of our knowledge, we are the first to formally define and study the problems suggested in Section~\ref{sec:intro}.  In the following we review related work. 

\spara{Modeling opinion dynamics.} Opinion dynamics has been a topic of intense  study by political scientists, economists, sociologists, physicists, control systems scientists, and computer scientists, among others.  Opinion dynamics model social learning processes. Such models have important applications in understanding  political  voting (e.g., \cite{braha2017voting,acemoglu2011opinion}), viral marketing (e.g., \cite{Das:2014:MOD:2556195.2559896}), and various other phenomena that take place in social media (e.g., \cite{quattrociocchi2014opinion}).  There exist {\em discrete} and {\em continuous} models. In the former, agents have only two possible opinions, 0 or 1. In the latter, opinions span a continuous range; popular choices in the literature are intervals $[0,1]$, and $[-1,1]$. The voter model is a popular discrete model that  was originally described by  Clifford and Sudbury \cite{clifford1973model}  in the context of a spatial conflict between animals for territory. In general, discrete models apply  various interaction mechanisms that may update an agent's opinion.  Such mechanisms include randomly adopting the opinion of a connected neighbor or applying a local majority rule \cite{durrett1988lecture}.    

DeGroot introduced a continuous opinion dynamics model in his seminal work on consensus formation \cite{degroot1974reaching}. A set of $n$ individuals in society start with initial opinions on a subject. Individual opinions are updated using the average of the neighborhood of a fixed social network. Friedkin and Johnsen \cite{friedkin1997social} extended the DeGroot model to include both disagreement and consensus by mixing each individual's \emph{innate belief} with some weight into the averaging process.  We discuss  this model in greater detail below.    Other  continuous models are the Hegselmann-Krause model \cite{hegselmann2002opinion} and the Deffuant-Weisbuch model  \cite{zhang2013opinion}. 
Das, Gollapudi, Munagala use a model which combines the continuous and discrete approach to fit opinion data better \cite{Das:2014:MOD:2556195.2559896}.  Bhawalkar, Gollapudi, Munagala  study the scenario where the network and the agents' opinions co-evolve  \cite{bhawalkar2013coevolutionary}. 

It is worth mentioning that many aspects  of opinion formation have been incorporated into existing models such as opinion leaders, external influences and many others.  We refer the interested reader to the  excellent survey  by Mossel and Tamuz and references therein for more related work on opinion dynamics models \cite{mossel2017opinion}.

\spara{Friedkin-Johnsen opinion dynamics.}    Each node maintains a persistent internal (or innate) opinion $s_i$, that remains constant. The node updates its expressed opinion $z_i$ through repeated averaging.  More precisely, if $w_{ij} \geq 0$ is the weight on edge $(i,j) \in E$ and $N(i)$ denotes the neighborhood of node $i$, then in one time step agent $i$ updates its  opinion to be the average

$$ z_i = \frac{s_i + \sum\limits_{j \in N(i)} w_{ij} z_j}{1+\sum\limits_{j \in N(i)} w_{ij}}.$$ 

\noindent Let $z^*$ be the equilibrium of this process. The value $z^*_i$ at equilibrium is the {\em expressed opinion} of node $i$. It is well-known that the equilibrium  $z^*$  is the solution to a linear system of equations: 

\begin{tcolorbox}
\vspace{-.5em}
\begin{align}
\label{eq:fj}
\begin{array}{c@{}cr}
z^* =  &(I+L)^{-1} s &  \text{[Friedkin-Johnsen Equilibrium]}
\end{array}
\end{align}
\end{tcolorbox}
%$$z^* = (I+L)^{-1} s,$$ 

\noindent Here $L$ is the combinatorial Laplacian of the connection graph $G$.  Notice that $(I+L)$ is always invertible since it is positive definite (in fact, all of its eigenvalues are greater than or equal to 1).   The Friedkin-Johnsen model has been used by Bindel et al. as a model for understanding  the price of anarchy in society when individuals selfishly update their opinions in order to minimize the stress they experience \cite{bindel2015bad}. Here, stress consists of two terms, the stress due to the fact that node $i$ at equilibrium may express a different opinion than its innate one, and the difference of its expressed opinion and the expressed opinions of its neighbors. Formally, the  
stress of a node $i$ is defined as $(z^*_i - s_i)^2 + \sum\limits_{j \in N(i)} w_{ij}(z^*_i-z^*_j)^2$.  Gionis, Terzi, and Tsaparas use this model to suggest a variation of influence maximization \cite{gionis2013opinion}.

\spara{Polarization.}   Munson et al.   have created a browser widget that measures the bias of users based on the news articles they read, and suggests articles of the opposing view to reduce polarization  \cite{munson2013encouraging}.  Liao and Fu provide a tool that makes users aware of the extremity of their opinion, and how well justified it is given their expertise on the topic \cite{liao2014can}. Dandekar et al. define a polarizing  opinion formation process that increases a disagreement index at the end of the process, show limitations of existing linear models to capture extreme polarization, and suggest a non-linear model \cite{dandekar2013biased}. 
Flaxman et al. study political ``echo-chambers'' by examining browsing histories and show several interesting empirical findings with respect to how social networks affect humans' opinions \cite{flaxman2016filter}. 

Closest to our work lies that of Matakos et al. \cite{matakos2017measuring}.   In their work, they focus on the $\ell_2$ norm of the equilibrium vector $z^*$ under the Friedkin-Johnsen model, which is close to our measure of polarization, introduced in Section \ref{subsec:contro}, except that they do not mean center the opinion vector.
%The lack of centering  their vector around the mean makes their proposed index a suitable objective for measuring neutrality, rather than polarization whose goal is to measure deviation from the average opinion. 
Given a parameter $k$, they consider the problem of choosing a set of $k$ nodes such that if their innate opinions are set to zero, the proposed polarization index will be minimized. This problem is NP-hard, and focuses just on polarization, without considering disagreement.
Close to our work at a conceptual level also lies the work of Garimella, Morales, Gionis, and Mathioudakis \cite{garimella2016quantifying}, which focuses on Twitter data. Garimella et al. create graphs from Twitter data, propose a new definition of polarization, and detect topics  that cause intense debates based on this measure.  In follow-up work, they consider the problem of making link recommendations to reduce polarization \cite{garimella2017reducing,garimella2017exposing}. 

Finally, an important aspect of using opinion dynamics with real data is opinion mining. For example, how do we map tweets to opinions, i.e., real numbers? A lot of  work in this area has been done by the natural language processing (NLP) community; see the recent survey 
by Nakov for more details  on sentiment analysis and Twitter opinion mining \cite{nakov2017semantic}. 

\spara{Optimizing over graph topologies.}  Problem 1 (see Section~\ref{sec:intro})  aims to optimize an objective over graph topologies. We review some instances of related graph topology optimization problems, most of which lie outside the graph mining literature, which has focused on edge recommendation rather than global structural  results.  

 In 2009, Daitch, Kelner, and Spielman suggested learning graphs that fit vector points well in order to leverage graph mining techniques for machine learning \cite{daitch2009fitting}.  They propose using the objective   $||LX||_F^2$ where $L$ is the graph Laplacian   and $X$ is a data matrix representing $n$ points in $d$ dimensions \cite{daitch2009fitting}. Their  objective  minimizes the reconstruction error over all points, where the reconstruction error of a point is the $\ell_2$ norm of the difference of the point and the weighted average of its neighbors. Other groups have considered related approaches; see for instance \cite{dong2016learning,kalofolias2016learn}.  

The central problem in phylogenetics is to learn an evolutionary tree that explains the observed existence of $n$ species. A popular family of techniques are distance-based;  find an evolutionary tree (whose leaves correspond to the species)  which most closely matches a set of ${n \choose 2}$ pairwise genetic distances  \cite{felsenstein1985confidence}. In machine learning, a wide variety of techniques  (e.g., Chow-Liu) have been introduced to learn graphical models from data via optimization problems over graph topologies \cite{koller2009probabilistic}.  Boyd et al. have studied the problem of finding the fastest mixing Markov chain on  a graph \cite{boyd2004fastest,sun2006fastest}, casting it as an optimization problem over all possible weighted subgraphs  of a given graph.

\section{Proposed Method}
\label{sec:proposed}
 % !TEX root = arxiv.tex
\subsection{Polarization and Disagreement} 
\label{subsec:contro}
Following \eqref{eq:fj}, let $z^* = (I+L)^{-1} s$ be the equilibrium vector of opinions according to the Friedkin-Johnsen model, for a social network $G(V,E,w)$, and innate opinions $s:V\rightarrow [0,1]$. 

\spara{Disagreement.}   We define the disagreement  $d(u,v)$ of  edge $(u,v)$ as the squared difference between the opinions of $u,v$ at equilibrium:     $d(u,v) \myeq w_{uv} (z^*_u-z^*_v)^2$. We define total disagreement $D_{G,s}$ as:

\begin{tcolorbox}
\vspace{-.5em}
\begin{align}
\label{eq:dfn_disagree}
\begin{array}{c@{}cr}
D_{G,s} \myeq &  \sum\limits_{(u,v) \in E} d(u,v). &  \text{  [Disagreement]}
\end{array}
\end{align}
\end{tcolorbox}

\spara{Polarization.} Intuitively, polarization should measure how opinions at equilibrium deviate from the average. There are many ways to quantify this. We choose the  standard definition of variance, i.e., the second moment   of the opinions.  Specifically, let $ \bar{z} $ be the mean-centered equilibrium vector:  

$$ \bar{z} =z^*  - \frac{z^{*T}  \vec{1}}{n} \vec{1}.$$
\noindent Then the polarization $P_{G,s}$ is defined to be:
\begin{tcolorbox}
\vspace{-.5em}
\begin{align}
\label{eq:dfn_contr}
\begin{array}{c@{}cr}
P_{G,s} \myeq & \sum\limits_{u \in V} \bar{z}_u^2 = \bar{z}^T  \bar{z} &  \text{  [Polarization]}
\end{array}
\end{align}
\end{tcolorbox}

\noindent 
%Notice that the larger $P_{G,s}$ is, the more opinions at equilibrium deviate from their mean.
We now introduce the Polarization-Disagreement index, i.e., the objective we care about.

%\begin{definition}
\begin{tcolorbox}
\vspace{-.5em}
\begin{align}
\label{eq:dfn_disagree}
\begin{array}{c@{}cr}
\mathcal{I}_{G,s} \myeq &\  P_{G,s} + D_{G,s}  &  \text{  [Polarization-Disagreement index] }
\end{array}
\end{align}
\end{tcolorbox}

\noindent Two useful propositions and an observation follow. 

\begin{proposition}  
\label{prop1}
The disagreement $D_{G,s}$ satisfies the equation: 
$$  D_{G,s} = \sum\limits_{(u,v) \in E} w_{uv} (\bar{z}_u - \bar{z}_v)^2.$$ 
\end{proposition}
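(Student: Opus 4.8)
The plan is to show that mean-centering shifts every coordinate of $z^*$ by the same constant, so the pairwise differences appearing in the disagreement sum are unchanged. Concretely, write $\bar z = z^* - c\vec 1$ where $c = z^{*T}\vec 1/n$ is a scalar. Then for any edge $(u,v)$ we have $\bar z_u - \bar z_v = (z^*_u - c) - (z^*_v - c) = z^*_u - z^*_v$. Squaring and multiplying by $w_{uv}$ gives $w_{uv}(\bar z_u - \bar z_v)^2 = w_{uv}(z^*_u - z^*_v)^2 = d(u,v)$, and summing over all edges in $E$ yields $\sum_{(u,v)\in E} w_{uv}(\bar z_u - \bar z_v)^2 = \sum_{(u,v)\in E} d(u,v) = D_{G,s}$ by the definition in \eqref{eq:dfn_disagree}.

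Equivalently, and perhaps more in line with how the objective is later manipulated, one can phrase this via the quadratic form of the Laplacian: for any vector $x$, $x^T L x = \sum_{(u,v)\in E} w_{uv}(x_u - x_v)^2$. Applying this with $x = z^*$ gives $D_{G,s} = z^{*T} L z^*$, and applying it with $x = \bar z$ gives $\sum_{(u,v)\in E} w_{uv}(\bar z_u - \bar z_v)^2 = \bar z^T L \bar z$. So it suffices to check $\bar z^T L \bar z = z^{*T} L z^*$. Since $\bar z = z^* - c\vec 1$ and $L\vec 1 = \vec 0$ (the all-ones vector is in the kernel of the combinatorial Laplacian for any weighted graph), we get $\bar z^T L \bar z = (z^* - c\vec 1)^T L (z^* - c\vec 1) = z^{*T} L z^* - 2c\, z^{*T} L \vec 1 + c^2 \vec 1^T L \vec 1 = z^{*T} L z^*$, as desired.

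There is essentially no obstacle here: the only fact being used beyond elementary algebra is that $L\vec 1 = \vec 0$, which is immediate from $L = D - A$ since the $i$-th row of $L$ sums to $d(i) - \sum_{j\in N(i)} w_{ij} = 0$. I would present the second (Laplacian) version as the main argument since it connects cleanly to the spectral manipulations used elsewhere in the paper, and perhaps remark that the first version makes the geometric content transparent: disagreement is translation-invariant, so replacing $z^*$ by its mean-centered version costs nothing.
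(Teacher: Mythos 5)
Your first argument is exactly the paper's proof: mean-centering shifts all coordinates by the same constant $\mu$, so the pairwise differences (and hence each term $d(u,v)$) are unchanged, and summing over edges gives the claim. The Laplacian quadratic-form version you add is a correct equivalent reformulation (it is essentially Observation~\ref{disagreementQuadratic} plus $L\vec 1 = \vec 0$), but the core reasoning is the same, so there is nothing to flag.
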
 

\begin{proof} 
Let $\mu \myeq \frac{z^{*T}  \vec{1}}{n} $ be the mean of $z^*$, so the $i$-th coordinate of $\bar{z}$ is $\bar{z}_i = z_i^*-\mu$. Now, observe that
$ w_{uv} (\bar{z}_u - \bar{z}_v)^2 =  w_{uv} ( z^*_u-\mu - z^*_v+\mu)^2 = d(u,v)$.  The result follows by summing over all edges. 
\end{proof}  

\begin{observation} \label{disagreementQuadratic}
The disagreement $D_{G,s}$ is a quadratic form. Specifically, $D_{G,s} = {z^*}^T L z^*$, and by Proposition~\ref{prop1}, $ D_{G,s} = \bar{z}^T L \bar{z}$. 
\end{observation}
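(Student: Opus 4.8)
The plan is to reduce everything to the single classical identity that, for the combinatorial Laplacian $L = D - A$ of a weighted graph and \emph{any} vector $x \in \mathbb{R}^n$,
\[
x^T L x \;=\; \sum_{(u,v) \in E} w_{uv}\,(x_u - x_v)^2 .
\]
First I would recall why this holds: expand $x^T L x = x^T D x - x^T A x = \sum_i d(i) x_i^2 - \sum_{i,j} w_{ij} x_i x_j$, and regroup the edge contributions, using $d(i) = \sum_{j \in N(i)} w_{ij}$, so that each edge $(u,v)$ contributes $w_{uv}x_u^2 + w_{uv}x_v^2 - 2w_{uv}x_u x_v = w_{uv}(x_u-x_v)^2$. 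This is a one-line standard computation and is the only nontrivial ingredient.

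Next I would apply this identity with $x = z^*$. By the very definition of $d(u,v) = w_{uv}(z^*_u - z^*_v)^2$ and of $D_{G,s} = \sum_{(u,v)\in E} d(u,v)$ in \eqref{eq:dfn_disagree}, the right-hand side is exactly $D_{G,s}$, so $D_{G,s} = {z^*}^T L z^*$. For the mean-centered statement, I would apply the same identity with $x = \bar z$: the right-hand side is $\sum_{(u,v)\in E} w_{uv}(\bar z_u - \bar z_v)^2$, which equals $D_{G,s}$ by Proposition~\ref{prop1}, hence $D_{G,s} = \bar z^T L \bar z$. (Alternatively, since $\bar z = z^* - \mu \vec 1$ with $\mu = z^{*T}\vec 1 / n$, and $L\vec 1 = \vec 0$, the cross terms vanish: $\bar z^T L \bar z = {z^*}^T L z^* - 2\mu\, {z^*}^T L \vec 1 + \mu^2 \vec 1^T L \vec 1 = {z^*}^T L z^*$; I might include this as a remark since it also reproves Proposition~\ref{prop1} at the level of quadratic forms.)

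There is essentially no obstacle here — the statement is a direct corollary of the Laplacian quadratic-form identity together with a definition and a previously established proposition. The only thing to be careful about is bookkeeping: making sure the sum over edges in the identity matches the convention used in the definition of $D_{G,s}$ (each undirected edge counted once), which it does. So the write-up should be short.
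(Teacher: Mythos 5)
Your proof is correct and matches the route the paper intends: the paper states this as an unproved Observation, relying on exactly the standard Laplacian quadratic-form identity $x^T L x = \sum_{(u,v)\in E} w_{uv}(x_u - x_v)^2$ applied to $z^*$, and on Proposition~\ref{prop1} for the mean-centered version. Your write-up simply makes that implicit argument explicit, so there is nothing to flag.
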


\begin{proposition}  
\label{prop2}
Let $\bar{s} = s - \frac{s^T \vec{1}}{n} \vec{1}$ be the mean-centered innate opinion vector. Then, $\bar{z} = (I+L)^{-1} \bar{s}$. 
\end{proposition}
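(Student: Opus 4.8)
The plan is to exploit the single structural fact that the all-ones vector $\vec 1$ is in the kernel of the Laplacian, namely $L\vec 1 = \vec 0$, and hence $(I+L)\vec 1 = \vec 1$, so that $\vec 1$ is an eigenvector of $(I+L)^{-1}$ with eigenvalue $1$, i.e. $(I+L)^{-1}\vec 1 = \vec 1$. Everything else is linear bookkeeping.

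First I would expand $(I+L)^{-1}\bar s$ using linearity: since $\bar s = s - \frac{s^T\vec 1}{n}\vec 1$,
\[
(I+L)^{-1}\bar s = (I+L)^{-1}s - \frac{s^T\vec 1}{n}(I+L)^{-1}\vec 1 = z^* - \frac{s^T\vec 1}{n}\vec 1,
\]
where the last equality uses $z^* = (I+L)^{-1}s$ and the eigenvector fact above. Comparing with the definition $\bar z = z^* - \frac{z^{*T}\vec 1}{n}\vec 1$, it remains to show the two scalar coefficients agree, i.e. that $z^*$ and $s$ have the same mean: $z^{*T}\vec 1 = s^T\vec 1$.

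Second, I would establish that mean-preservation by transposing: $z^{*T}\vec 1 = s^T (I+L)^{-1}\vec 1 = s^T\vec 1$, using that $(I+L)^{-1}$ is symmetric together with, again, $(I+L)^{-1}\vec 1 = \vec 1$. Substituting this into the previous display gives $(I+L)^{-1}\bar s = z^* - \frac{z^{*T}\vec 1}{n}\vec 1 = \bar z$, which is exactly the claim.

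I do not expect any real obstacle here; the only thing to be careful about is making the "same mean" step explicit rather than assuming it, since the definition of $\bar z$ centers by the mean of $z^*$ while the statement centers by the mean of $s$, and the whole content of the proposition is that these coincide. (Alternatively, one could phrase the argument in terms of the orthogonal projection $P = I - \frac{1}{n}\vec 1\vec 1^T$ onto the mean-zero subspace and note that $P$ commutes with $(I+L)^{-1}$ because both $\vec 1$ and its orthogonal complement are invariant subspaces of $(I+L)^{-1}$; then $\bar z = Pz^* = P(I+L)^{-1}s = (I+L)^{-1}Ps = (I+L)^{-1}\bar s$. I would likely present the elementary version above, but mention this viewpoint as it makes the mechanism transparent.)
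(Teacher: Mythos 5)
Your proposal is correct and follows essentially the same route as the paper's proof: both arguments rest on the two facts that $(I+L)^{-1}\vec 1 = \vec 1$ (from $L\vec 1 = \vec 0$) and that consequently ${z^*}^T\vec 1 = s^T(I+L)^{-1}\vec 1 = s^T\vec 1$, then finish by linearity. The only cosmetic difference is that you expand $(I+L)^{-1}\bar s$ and work toward $\bar z$ while the paper starts from $\bar z$ and factors out $(I+L)^{-1}$, and you make the use of symmetry of $(I+L)^{-1}$ explicit where the paper leaves it implicit.
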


\begin{proof} 
For any graph, $L \vec{1}=0$. %, and $\vec{1}$ is the unique zero-th eigenvector.  
Therefore $(I+L)\vec{1} = \vec{1}$, or equivalently 
$\vec{1} = (I+L)^{-1} \vec{1}$. This implies that ${z^*}^T\vec{1} = s^T (I+L)^{-1} \vec{1} = s^T \vec{1}$. By these facts, we obtain that 
\begin{align*}
\bar{z} &= z^* - \frac{ {z^*}^T\vec{1}}{n} \vec{1} = (I+L)^{-1} s -\frac{{z^*}^T \vec{1}}{n} \vec{1} =  (I+L)^{-1} s - \frac{s^T \vec{1}}{n} \vec{1} \\ &=    (I+L)^{-1} \big( s- \frac{\vec{1}^T s}{n} \vec{1}\big) = (I+L)^{-1} \bar{s}.
\end{align*}
\end{proof}

\noindent  Proposition~\ref{prop2}    provides an alternative way of computing $\bar{z}$. We can either find it in the obvious way, i.e., find $z^*$ and then center it around zero. Alternatively, we can first center $s$ around zero, and then obtain $\bar{z}$ as the  Friedkin-Johnsen equilibrium when the innate opinion vector is $\bar{s}$.   

\spara{Trade-off between polarization and disagreement.} Before proceeding into any technical details, we show a simple example that illustrates the trade-off between polarization and disagreement. Suppose there are three agents, two of which have opinion 0, and one has opinion 1 on a certain topic, i.e., $s=[0,0,1]$. We wish to recommend one link with weight $1$ between these three agents.  The recommendation that agrees with human confirmation bias, and therefore does not cause any dissatisfaction to the three agents, is the edge between nodes $1$ and $2$.  The equilibrium opinion vector is the same as $s$, i.e., $z^*=[0,0,1]$. The total disagreement is 0, and the polarization is equal to $(-1/3)^2+(-1/3)^2+(2/3)^2= 0.667$.  An alternative choice is to recommend the edge between nodes $1,3$. The equilibrium now is $[1/3,0,2/3]$.  The total polarization is equal to $0^2+(-1/3)^2+(1/3)^2=0.222$. 
The total disagreement is $(2/3-1/3)^2=0.111$.  Therefore, the second recommendation resulted in a better outcome with respect to the sum of polarization and disagreement. By symmetry, edge $(2,3)$ has the same effect as edge $(1,3)$. The results are summarized in Table~\ref{tab:contro}.

\begin{table}[h]
\centering
\begin{tabular}{|c|cc|c|} \hline 
  Recommended link         &  $P_{G,s}$ &  $D_{G,s}$ &  $\mathcal{I}_{G,s}$ \\ \hline
 $(1,2)$     &  0.667   & 0  & 0.667  \\ 
 $(1,3)$     &  0.111     & 0.222  & 0.333  \\  
 $(2,3)$     &  0.111     & 0.222  & 0.333  \\  \hline
\end{tabular}
\vspace{.5em}
\caption{\label{tab:contro}Trade-off between polarization and disagreement for three agents with innate opinions $s_1=s_2=0,s_3=1$. For details, see Section~\ref{subsec:contro}.}
\end{table}
 \vspace{-2em}
\subsection{Optimizing over Graph Topologies  } 
\label{subsec:topologies}
Using the definitions of Section~\ref{subsec:contro}, we can now formulate Problem~\ref{prob1}  from Section~\ref{sec:intro} mathematically. The  objective is to minimize the sum of two terms,   polarization and   disagreement.  
  Here, $\mathcal{L}$ is the set of valid combinatorial Laplacians of connected graphs.   Observe that the trace of the Laplacian is 
equal to twice the total edge weight of the corresponding graph. 
 
\begin{tcolorbox}
\vspace{-.5em}
\begin{align}
\label{eq:p1}
  \text{ \underline{ Problem ~1 }  }
\begin{array}{ll@{}ll}
\text{min}_{L \in \field{R}^{n\times n}}   & \bar{z}^{T} \bar{z}  + \bar{z}^T  L \bar{z} &   \\
\text{subject to} & L \in \mathcal{L}   & & \\ 
						&\Tr(L) = 2m & & \\
\end{array}
\end{align}
\end{tcolorbox}

Note that by \eqref{eq:dfn_contr} and Observation \ref{disagreementQuadratic}, $\bar{z}^{T} \bar{z}  + \bar{z}^T  L \bar{z} = D_{G,s} + P_{G,s}$, where $G$ is the weighted graph corresponding to the Laplacian $L$. Thus \eqref{eq:p1} is equivalent to minimizing the polarization-disagreement index $\mathcal{I}_{G,s}$ over all graphs $G$ with total edge weight $m$.

\begin{theorem} 
\label{objconv}
The objective  $\bar{z}^T \bar{z}+\bar{z}^TL \bar{z}$ is a {\em convex} function of the edge weights in the graph $G$ corresponding to the Laplacian $L$. %In particular, $\bar{z}^T \bar{z}+\bar{z}^TL \bar{z} = \bar{s}^T(I+L)^{-1} \bar s$.
\end{theorem}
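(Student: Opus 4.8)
The plan is to first collapse the objective to a single simple quadratic form and then invoke a standard convexity fact. Write $M \myeq I+L$. Using Proposition~\ref{prop2} to substitute $\bar z = M^{-1}\bar s$, and recalling $L = M-I$, the objective becomes
\begin{align*}
\bar z^T\bar z + \bar z^T L\bar z
&= \bar s^T M^{-1}M^{-1}\bar s + \bar s^T M^{-1}(M-I)M^{-1}\bar s\\
&= \bar s^T M^{-2}\bar s + \bar s^T M^{-1}\bar s - \bar s^T M^{-2}\bar s\\
&= \bar s^T (I+L)^{-1}\bar s .
\end{align*}
So the entire polarization--disagreement index equals $\bar s^T(I+L)^{-1}\bar s$, where $\bar s$ is a fixed vector depending only on the innate opinions and not on the graph. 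It therefore suffices to show that $L\mapsto \bar s^T(I+L)^{-1}\bar s$ is a convex function of the edge weights.

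Next I would exploit the fact that $L$ is \emph{linear} in the edge weights: $L=\sum_{e} w_e L_e$ with $L_e=(\mathbf{e}_i-\mathbf{e}_j)(\mathbf{e}_i-\mathbf{e}_j)^T$ for $e=(i,j)$, so $w\mapsto I+L(w)$ is an affine matrix-valued map whose values are positive definite for every $w\ge 0$ (all eigenvalues are at least $1$, as noted after \eqref{eq:fj}). The cleanest finish is the variational identity
$$ b^T X^{-1} b \;=\; \sup_{y\in\field{R}^n}\ \bigl(2\,y^T b - y^T X y\bigr), \qquad X\succ 0,$$
whose supremum is attained at $y=X^{-1}b$. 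Taking $X=I+L(w)$ and $b=\bar s$, for each fixed $y$ the map $w\mapsto 2\,y^T\bar s - \|y\|^2 - \sum_e w_e\,(y_i-y_j)^2$ is affine in $w$; hence $\bar s^T(I+L(w))^{-1}\bar s$ is a pointwise supremum of affine functions of $w$ and is therefore convex. A fully equivalent route is to quote that $X\mapsto b^T X^{-1}b$ is convex on the positive-definite cone (e.g.\ via the Schur-complement epigraph description $t\ge b^TX^{-1}b \iff \left(\begin{smallmatrix}X & b\\ b^T & t\end{smallmatrix}\right)\succeq 0$) and then precompose with the affine map $w\mapsto I+L(w)$, using that convexity is preserved under affine precomposition.

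I do not anticipate a real obstacle: the only nontrivial move is spotting the cancellation that reduces $\mathcal I_{G,s}$ to $\bar s^T(I+L)^{-1}\bar s$, after which everything is routine. Two points are worth stating carefully: (i) the reduction above should be presented as an identity of functions of the edge-weight vector $w$, not merely at a single graph; and (ii) the domain --- the claim should be read as convexity over the set of nonnegative edge weights, which contains the feasible region of Problem~\ref{prob1} (the set $\mathcal L$ together with the trace constraint is itself a convex set). On this domain $I+L\succ 0$ throughout, so $(I+L)^{-1}$ is well defined and the variational formula applies, with a degenerate all-zero $w$ being harmless since then $I+L=I$.
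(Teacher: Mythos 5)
Your reduction of the objective to $\bar s^T(I+L)^{-1}\bar s$ is exactly the paper's first step, and the proof is correct; where you diverge is in how you certify convexity of $L\mapsto \bar s^T(I+L)^{-1}\bar s$. The paper treats this as a black box: it quotes the matrix-convexity of $f(L)=(I+L)^{-1}$ on the positive-semidefinite cone (citing Nordstr\"om) and notes that matrix convexity implies convexity of every scalar quadratic form $x^T(I+L)^{-1}x$. You instead give a self-contained elementary argument: the variational identity $b^TX^{-1}b=\sup_{y}\bigl(2y^Tb-y^TXy\bigr)$ for $X\succ 0$ exhibits the objective as a pointwise supremum of functions that are affine in the edge weights (since $w\mapsto I+L(w)$ is affine), hence convex; the Schur-complement epigraph description is an equally valid alternative. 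Your route buys independence from the matrix-convexity literature and makes explicit the affine-precomposition step that the paper leaves implicit (the paper states convexity ``of the edge weights'' but only argues convexity in the matrix variable $L$); the paper's route is shorter and generalizes immediately to any congruence $\bar s^T M^{-1}\bar s$ once matrix convexity is accepted. Your remarks on the domain (nonnegative weights, $I+L\succ 0$ throughout) are correct and slightly more careful than the paper's treatment. No gaps.
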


\begin{proof} Using Proposition \ref{prop2}
we rewrite the objective as: 

\begin{align*}
\bar{z}^T \bar{z}+\bar{z}^TL \bar{z} &= \bar{s}^T (I+L)^{-1} (I+L)^{-1} \bar{s} + \bar{s}^T (I+L)^{-1} L (I+L)^{-1} \bar{s} \\
&=  \bar{s}^T(I+L)^{-1} (I+L) (I+L)^{-1}\bar{s} = \bar{s}^T(I+L)^{-1}\bar{s}. 
\end{align*}

\noindent It is known that the function  $f(L) = (I+L)^{-1}$ is matrix-convex  when $L \in \mathcal{L}$ and hence positive semidefinite \cite{nordstrom2011convexity}. That is, for any $\lambda \in (0,1)$,
$$ \lambda Z_1^{-1}+ (1-\lambda) Z_2^{-1}   \succeq (\lambda Z_1 + (1-\lambda)Z_2)^{-1}.$$
This gives that $x^T (I+L)^{-1} x$ is convex for all vectors $x \in \field{R}^n$. 
\end{proof} 

\noindent Additionally, the set of Laplacians $ \mathcal{L}$ which we optimize over is convex (standard fact). 

\begin{claim}
\label{c1} 
The set $ \mathcal{L} \myeq  \{L^{n \times n}: L\text{~Laplacian}, \Tr(L) = 2m \}$ is convex. 
\end{claim}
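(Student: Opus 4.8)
The plan is to show directly that $\mathcal{L}$ is closed under convex combinations. Take $L_1, L_2 \in \mathcal{L}$ and $\lambda \in [0,1]$, and set $L = \lambda L_1 + (1-\lambda) L_2$; I must verify that $L$ is again a valid combinatorial Laplacian and that $\Tr(L) = 2m$. The trace condition is immediate from linearity of the trace: $\Tr(L) = \lambda \Tr(L_1) + (1-\lambda)\Tr(L_2) = \lambda\, 2m + (1-\lambda)\, 2m = 2m$. So the only real content is checking that the convex combination of two Laplacians is a Laplacian.

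The key step is to recall the characterization of combinatorial Laplacians: a symmetric matrix $M \in \field{R}^{n\times n}$ is the Laplacian of some weighted graph (with nonnegative edge weights) if and only if (i) $M$ is symmetric, (ii) all off-diagonal entries satisfy $M_{ij} \le 0$, and (iii) each row sums to zero, i.e. $M\vec{1} = \vec{0}$ (equivalently $M_{ii} = -\sum_{j\neq i} M_{ij}$). Given this, I would check each property for $L = \lambda L_1 + (1-\lambda) L_2$: symmetry is preserved under nonnegative linear combinations; for $i \neq j$ we have $L_{ij} = \lambda (L_1)_{ij} + (1-\lambda)(L_2)_{ij} \le 0$ since $\lambda, 1-\lambda \ge 0$ and both terms are nonpositive; and $L\vec{1} = \lambda L_1 \vec{1} + (1-\lambda) L_2 \vec{1} = \vec{0}$. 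Hence $L \in \mathcal{L}$, and convexity follows. (If one wants the edge-weight viewpoint instead: the weight of edge $(i,j)$ in $L$ is $w_{ij} = \lambda w^{(1)}_{ij} + (1-\lambda) w^{(2)}_{ij} \ge 0$, which is exactly the statement that $\mathcal{L}$ viewed as the image of the nonnegative orthant of edge-weight space under a linear map is convex.)

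I do not anticipate a genuine obstacle here — this is the ``standard fact'' the text already flags. The one subtlety worth a sentence is whether connectedness should be imposed as part of the constraint set: a convex combination of two Laplacians of connected graphs need not be connected only if some edges cancel, but since all weights are nonnegative no cancellation occurs, so the support of $L$ contains the union of the supports of $L_1$ and $L_2$; if either $G_1$ or $G_2$ is connected then so is the combined graph. In any case, the optimization in Problem~1 can be taken over the closure, and the set $\{L : L \text{ Laplacian}, \Tr(L) = 2m\}$ as literally stated is convex by the argument above. The proof is therefore a short verification of the three defining properties of a Laplacian together with linearity of the trace.
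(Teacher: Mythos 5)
Your proof is correct: the paper offers no proof of this claim (it simply labels it a ``standard fact''), and your verification via the characterization of a weighted Laplacian (symmetry, nonpositive off-diagonal entries, zero row sums) together with linearity of the trace is exactly the standard argument intended. Your side remark on connectedness being preserved for $\lambda\in(0,1)$ because no cancellation of nonnegative weights can occur is a worthwhile addition, since the surrounding text does restrict $\mathcal{L}$ to Laplacians of connected graphs even though the claim as stated omits that condition.
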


By Theorem~\ref{objconv} and Claim~\ref{c1} we obtain that Problem 1 is solvable in polynomial time. One may use gradient descent or second order methods, see \cite{boyd2004convex}.  In the following, we show how to compute the gradient in a closed form in order to perform gradient descent more efficiently (compared to relying on numerical approximations of it). 

\spara{Gradient.} 
Let $N= {n \choose 2}$, and $e_i$ be the $i$-th standard basis vector in $\field{R}^N$. We can write $L = B \diag(w) B^T$ where $B\in\mathbb{R}^{n\times N}$ is the sign oriented incidence matrix and $w \in \mathbb{R}^N$ is the vector of edge weights for the graph $G$ corresponding  to $L$. Let $b_i$ be the $i$-th column of $B$, $i=1,\ldots,N$.  
 Observe that the  graph Laplacian corresponding to the weight vector $w +\epsilon e_i,$ is $L+\epsilon b_i b_i^T$.  Hence, if we perturb the $i$-th coordinate of any weight vector by $\epsilon$, i.e., to $w +\epsilon e_i$ we obtain the Laplacian  $L+\epsilon b_i b_i^T$.
 The Sherman-Morrison formula for the matrix pseudo-inverse  yields:
\begin{align*}
(I+L+\epsilon b_ib_i^T)^+ = (I+L)^{-1} -\epsilon \frac{(I+L)^{-1} b_ib_i^T(I+L)^{-1} }{1+\epsilon b_i^T(I+L)^{-1}b_i},
\end{align*} 
and hence, thinking of $(I+L)^{-1}$ as a matrix-valued function of $w$,
\begin{align*}
\frac{\partial (I+L)^{-1}}{\partial w_i} &= \\ 
\lim_{\epsilon \rightarrow 0} \frac{1}{\epsilon}\left[(I+L)^{-1} -\epsilon \frac{(I+L)^{-1} b_ib_i^T (I+L)^{-1}}{1+\epsilon b_i^T(I+L)^{-1}b_i} - (I+L)^{-1} \right] &= \\ 
 -(I+L)^{-1}b_ib_i^T (I+L)^{-1}.&
\end{align*}

Therefore, by  linearity, and the fact that the objective $\bar z^T \bar z + \bar z^T  L \bar z = \bar s^T (I+L)^{-1} \bar s$ as shown in Theorem \ref{objconv},
$$\frac{\partial s^T(I+L)^{-1}s}{\partial w_i}= -s^T(I+L)^{-1}b_ib_i^T (I+L)^{-1}s.$$

\spara{Non-convexity.} Perhaps surprisingly,   a slightly more general form of our objective, where one of the two terms is multiplied by any factor $\rho \geq 0$ (i.e., polarization and disagreement are weighted differently), is not convex! 

\begin{theorem} 
\label{nonconvexobjective}
Let $\rho \bar{z}^T \bar{z} + \bar{z}^TL \bar{z}, \rho\geq 0$ be our objective.  For $\rho=0$, the objective is a non-convex function of the edge weights. 
\end{theorem}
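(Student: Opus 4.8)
When $\rho = 0$ the objective reduces to $\bar z^T L \bar z = D_{G,s}$, which by Observation~\ref{disagreementQuadratic} and Proposition~\ref{prop2} equals $\bar s^T (I+L)^{-1} L (I+L)^{-1} \bar s$. The plan is to exhibit a single small instance — a fixed innate opinion vector $\bar s$ and a one-parameter family of graphs $L(t)$ — on which this quantity, viewed as a function of the edge weights, fails to be convex. Since convexity along any line segment in weight-space is necessary for convexity, it suffices to find a line $w(t) = w_0 + t\,\Delta$ of valid weight vectors along which $g(t) \myeq \bar s^T (I+L(t))^{-1} L(t) (I+L(t))^{-1} \bar s$ has $g''(t_0) < 0$ at some point.

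The cleanest route is to pick $n = 2$ or $n = 3$ so that everything can be written out explicitly. For $n=2$ with a single edge of weight $w$, $L = \begin{pmatrix} w & -w \\ -w & w \end{pmatrix}$, and on the mean-centered subspace $L$ acts as the scalar $2w$ while $(I+L)^{-1}$ acts as $1/(1+2w)$; taking $\bar s = (c,-c)$ one gets $g(w) = c^2 \cdot \frac{2w}{(1+2w)^2}$ up to a constant. First I would compute $g'(w)$ and $g''(w)$: the function $w \mapsto w/(1+2w)^2$ increases, peaks, and then decreases, and a direct differentiation shows $g''(w) > 0$ for small $w$ but $g''(w) < 0$ once $w$ is large enough (the second derivative of $w/(1+2w)^2$ is $\frac{8(2w-1) - \text{lower order}}{(1+2w)^4}$-type expression, positive only past a threshold). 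So already a single-edge graph gives non-convexity; I would exhibit two specific weights $w_1 < w_2$ and a midpoint to make the failure of the midpoint inequality concrete, or simply report the sign of $g''$ at a convenient value such as $w = 1$.

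The main obstacle is purely bookkeeping: one must keep the example inside the feasible set $\mathcal{L}$ with the trace constraint, or else argue that the constraint is irrelevant because we only need non-convexity of the unconstrained weight-to-objective map (the trace constraint merely restricts to an affine slice, and a function can be non-convex on an affine subspace too, provided the violating segment lies in that slice). For $n = 2$ the trace constraint pins down $w$, so there is no room; hence I would use $n = 3$, where after fixing two of the three edge weights the remaining free weight can vary over an interval while $\Tr(L)$ is still controlled by rescaling, and pick the innate opinions and the two fixed weights so that the restriction of $g$ to that interval is visibly non-convex (strictly concave near one end). The rest is a finite symbolic computation — evaluating a $3\times 3$ inverse and a scalar quadratic form at three points — which I would relegate to a short verification rather than carry out in the body of the proof.
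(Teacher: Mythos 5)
Your strategy is sound and genuinely different from the paper's. The paper proves non-convexity of the matrix-valued map $L \mapsto (I+L)^{-1}L(I+L)^{-1}$ by taking two specific $3$-node path Laplacians $L_1,L_2$ (both of total edge weight $2$, so the segment between them stays inside the constant-trace slice) and checking \emph{numerically} that $\tfrac12 f(L_1)+\tfrac12 f(L_2)-f(\tfrac12 L_1+\tfrac12 L_2)$ has a negative eigenvalue. You instead reduce everything to one-variable calculus: on a single edge of weight $w$ with $\bar s\propto(1,-1)$ the objective is $4c^2\,w/(1+2w)^2$, and a second-derivative computation settles the matter analytically, with no numerical verification. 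That is arguably cleaner and more self-contained than the paper's argument.

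Two concrete corrections are needed, though. First, your signs are backwards: $\frac{d^2}{dw^2}\,\frac{w}{(1+2w)^2}=\frac{8(w-1)}{(1+2w)^4}$, which is \emph{negative} for $w<1$ and positive for $w>1$ (your parenthetical ``positive only past a threshold'' has it right; the sentence preceding it does not). In particular $w=1$ is exactly the inflection point, so reporting the sign of $g''$ at $w=1$ yields $0$ and proves nothing; evaluate instead at, say, $w=1/2$, where $g''<0$. Second, your repair for the trace constraint --- letting one weight vary ``while $\Tr(L)$ is still controlled by rescaling'' --- does not work as stated, since rescaling $L$ changes the objective. The clean fix for $n=3$ is to move along a weight-preserving direction, e.g.\ $w_{12}=1+t$, $w_{13}=1-t$, $w_{23}=0$ for $t\in[-1,1]$: the total weight is constant in $t$, and the endpoints are exactly the two path Laplacians the paper uses. (If you only want non-convexity of the unconstrained weight-to-objective map, which is all the theorem literally asserts, your $n=2$ example already suffices once the sign of $g''$ is corrected.)
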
   

\begin{proof}
By Propositions~\ref{prop1} and \ref{prop2} we obtain that 

\begin{align*}
z^{*T} L z^* &= \bar{z}^T L \bar{z} = \bar{s}^T (I+L)^{-1}L (I+L)^{-1} \bar{s}.
 \end{align*}  
 
 \noindent  To prove that this is not convex, it suffices to prove that $f(L) = (I+L)^{-1}L(I+L)^{-1}$ is  non-convex.
Let $L_1,L_2 \in \mathcal{L}$. Assume for the sake of contradiction that $f(L) = (I+L)^{-1}L(I+L)^{-1}$ is  convex. Then,  the following follows by convexity for any  $\lambda \in (0,1)$,

\begin{align*}
\lambda f(L_1) + (1-\lambda) f(L_2)  \succeq f(\lambda L_1 + (1-\lambda) L_2) \rightarrow \\  
 \lambda (I+L_1)^{-1}L_1(I+L_1)^{-1}+ (1-\lambda) (I+L_2)^{-1}L_2(I+L_2)^{-1}  \succeq \\
  (I+\lambda L_1+(1-\lambda)L_2)^{-1}(\lambda L_1+(1-\lambda)L_2) (I+\lambda L_1+(1-\lambda)L_2)^{-1}.
\end{align*}
 
Set $\lambda= 0.5$,  and consider the following two Laplacian matrices corresponding to two paths on 3 nodes.  

\vspace{2mm}
\begin{center} 
$L_1=\left[\begin{array}{ccc} 
     1  &  -1  &   0 \\ 
    -1  &   2  & -1 \\
     0  &  -1  &    1
\end{array}\right]$,
$L_2 = \left[\begin{array}{ccc} 
     2  &  -1  &  -1 \\ 
    -1  &   1  &   0\\
    -1  &   0  &   1 
\end{array}\right].$
\end{center}
\vspace{2mm}

\noindent It is easy to verify  numerically that  the minimum eigenvalue of 
$\lambda (I+L_1)^{-1}L_1(I+L_1)^{-1}  + (1-\lambda) (I+L_2)^{-1}L_2(I+L_2)^{-1} - 
  (I+\lambda L_1+(1-\lambda)L_2)^{-1}(\lambda L_1+(1-\lambda)L_2) (I+\lambda L_1+(1-\lambda)L_2)^{-1} $ is negative. This 
contradicts the convexity assumption of $f(L)$.  
\end{proof} 

\noindent We note that it is easy to construct more counterexamples for other $\rho$ values (details omitted).  

Finally, we prove that there always exists a graph with ${O}(\frac{n}{\epsilon^2})$ edges that lies within a multiplicative $(1 \pm \epsilon +O(\epsilon^2))$ factor of the optimal -- that is, approximately  minimizing the polarization-disagreement index $\mathcal{I}_{G,s}$ does not require a dense graph $G$.   Our proof relies on the seminal work on spectral sparsification \cite{spielman2011graph,spielman2011spectral,batson2012twice,spielman2014nearly}. 
The next theorem proves  a sub-optimal result in terms of the number of edges in $G$, but uses the sparsification algorithm due to Spielman and Srivastava based effective resistances, which has the advantages of being fast and easy to implement \cite{spielman2011graph}. 

\begin{theorem}
\label{sparsifiers}
There always exists a graph $G$ with   $O(n \log n/\epsilon^2)$ edges that achieves polarization-disagreement index $\mathcal{I}_{G,s}$ within  a  multiplicative $(1 +\epsilon +O(\epsilon^2))$ factor of optimal for Problem 1. 
\end{theorem}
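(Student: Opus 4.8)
The plan is to take an optimal solution of Problem~1 and sparsify it spectrally. The key observation is the identity established inside the proof of Theorem~\ref{objconv}: for any connected graph $G$ with Laplacian $L$, $\mathcal{I}_{G,s}=\bar{s}^{T}(I+L)^{-1}\bar{s}$. Hence it suffices to exhibit a connected graph $H$ with $O(n\log n/\epsilon^{2})$ edges and total edge weight $m$ whose Laplacian $\tilde{L}$ satisfies $\bar{s}^{T}(I+\tilde{L})^{-1}\bar{s}\le(1+\epsilon+O(\epsilon^{2}))\,\mathcal{I}^{*}$, where $\mathcal{I}^{*}$ is the optimum of Problem~1, attained at the Laplacian $L^{*}$ of a connected graph $G^{*}$ (a minimizer exists by continuity of the objective and compactness of the feasible set $\{L : L \text{ a Laplacian}, \Tr(L)=2m\}$).

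First I would invoke the Spielman--Srivastava effective-resistance sparsification theorem \cite{spielman2011graph} on $G^{*}$ with error parameter $\epsilon':=\epsilon/2$, producing a reweighted subgraph $H_{0}$ with $O(n\log n/\epsilon^{2})$ edges and Laplacian $\tilde{L}_{0}$ such that $(1-\epsilon')L^{*}\preceq\tilde{L}_{0}\preceq(1+\epsilon')L^{*}$. Two consequences follow. (i)~Taking traces, which are linear and $\preceq$-monotone, the total edge weight $m_{0}$ of $H_{0}$ lies in $[(1-\epsilon')m,(1+\epsilon')m]$; I would therefore rescale and set $\tilde{L}:=(m/m_{0})\,\tilde{L}_{0}$, so that $\Tr(\tilde{L})=2m$ exactly. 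Since $m/m_{0}\in[\tfrac{1}{1+\epsilon'},\tfrac{1}{1-\epsilon'}]$, multiplying the bounds by $m/m_{0}$ gives $\gamma^{-1}L^{*}\preceq\tilde{L}\preceq\gamma L^{*}$ with $\gamma:=\tfrac{1+\epsilon'}{1-\epsilon'}$. (ii)~For every $v\perp\vec{1}$, $v\neq\vec{0}$, we have $v^{T}\tilde{L}v\ge(m/m_{0})(1-\epsilon')\,v^{T}L^{*}v>0$ since $G^{*}$ is connected, so $\tilde{L}$ has a one-dimensional kernel, i.e.\ the graph $H$ with Laplacian $\tilde{L}$ is connected. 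Thus $H$ is feasible for Problem~1 and has $O(n\log n/\epsilon^{2})$ edges.

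It remains to compare objectives. Since $\gamma\ge1$, $\gamma^{-1}I\preceq I\preceq\gamma I$, so adding $I$ to $\gamma^{-1}L^{*}\preceq\tilde{L}\preceq\gamma L^{*}$ gives $\gamma^{-1}(I+L^{*})\preceq I+\tilde{L}\preceq\gamma(I+L^{*})$. Both $I+L^{*}$ and $I+\tilde{L}$ are positive definite, and inversion reverses the Loewner order, so $(I+\tilde{L})^{-1}\preceq\gamma(I+L^{*})^{-1}$. Evaluating the quadratic form at $\bar{s}$, $\mathcal{I}_{H,s}=\bar{s}^{T}(I+\tilde{L})^{-1}\bar{s}\le\gamma\,\bar{s}^{T}(I+L^{*})^{-1}\bar{s}=\gamma\,\mathcal{I}^{*}$. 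Finally $\gamma=\tfrac{1+\epsilon/2}{1-\epsilon/2}=1+\epsilon+O(\epsilon^{2})$, which is the claimed factor.

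The matrix-monotonicity bookkeeping --- adding $I$, inverting a Loewner inequality, evaluating at $\bar{s}$ --- is routine. The step needing care, and the reason for running the sparsifier with parameter $\epsilon/2$ and for the $O(\epsilon^{2})$ slack, is that the Spielman--Srivastava sparsifier does not preserve the total edge weight, so one must rescale to restore $\Tr(\tilde{L})=2m$ and absorb the resulting $(1\pm\epsilon')$ factor; I expect this to be the only nontrivial part of the write-up. A minor additional point to state explicitly is that spectral sparsification preserves connectivity, so $H$ is a genuinely feasible graph for Problem~1.
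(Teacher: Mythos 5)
Your proposal is correct and follows essentially the same route as the paper: sparsify the optimal Laplacian via Spielman--Srivastava with parameter $\epsilon'=\epsilon/2$, rescale to restore $\Tr(\tilde{L})=2m$, and transfer the spectral approximation through $I+L$ and its inverse to bound the loss in $\bar{s}^{T}(I+L)^{-1}\bar{s}$ by a $\frac{1+\epsilon'}{1-\epsilon'}=1+\epsilon+O(\epsilon^2)$ factor. Your bookkeeping of the rescaling step (folding it into the two-sided Loewner bound before inverting) is in fact slightly cleaner than the paper's, which handles the trace correction after the fact.
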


\begin{proof}
Let $L$ be the Laplacian that minimizes  the sum of polarization and disagreement subject to total edge weight constraint $m$, i.e, the solution to \eqref{eq:p1}. By applying the Spielman-Srivastava algorithm to $L$ with error parameter $\epsilon'$, we obtain a spectral sparsifier whose combinatorial Laplacian $\tilde{L}$ satisfies for any $x \in \mathbb{R}^n$:
\begin{align}\label{eq:sparsifier}
(1-\epsilon') x^T L x \leq x^T \tilde{L} x \leq (1+\epsilon') x^T Lx.
\end{align}  

\noindent This in turn implies, 
\begin{align*}
%(1-\epsilon) x^T L x &\leq x^T \tilde{L} x \leq (1+\epsilon) x^T Lx \rightarrow \\ 
(1-\epsilon') x^T (L+I) x &\leq x^T (\tilde{L}+I) x \leq (1+\epsilon') x^T (L+I) x \rightarrow \\ 
\frac{1}{1+\epsilon'} x^T (L+I)^{-1} x &\leq x^T (\tilde{L}+I)^{-1} x \leq \frac{1}{1-\epsilon'} x^T (L+I)^{-1} x. 
\end{align*}

\noindent By setting $x =s$,   $OPT \myeq s^T  (L+I)^{-1} s$, and using the Taylor expansion for the fraction $\frac{1}{1-\epsilon'}=1+\epsilon'+O(\epsilon'^2)$ when $\epsilon'$ is small, we obtain  that  $s^T (\tilde{L}+I)^{-1} s \leq  \left (1+\epsilon'+O(\epsilon'^2)\right )\cdot OPT$. 

Note that we may not have $\Tr(\tilde L) = 2m$. However, by \eqref{eq:sparsifier}, $\Tr(\tilde L) \le (1+\epsilon') \Tr(L) = (1+\epsilon')2m$. Thus we can scale $\tilde L$ to have trace $ \Tr(\tilde L) = 2m$ and will still have  $s^T (\tilde{L}+I)^{-1} s \le \frac{1}{1+\epsilon'} \cdot \left (1+\epsilon'+O(\epsilon'^2)\right) \cdot OPT \le \left((1 + \epsilon + O(\epsilon^2)\right)\cdot OPT$ if we set $\epsilon' = \epsilon/2$.
Finally, notice that $s^T (\tilde{L}+I)^{-1} s \geq s^T (L+I)^{-1} s = OPT$ since $L$ is the optimizer of \eqref{eq:p1}. 
\end{proof}

As mentioned, we use the Spielman-Srivastava sparsification algorithm \cite{spielman2011graph} in our proof since we use it in our experiments. However, a follow-up result due to Batson-Spielman-Srivastava  \cite{batson2012twice} reduces the number of edges to linear in the number of nodes. Using the same mathematical arguments, but invoking \cite{batson2012twice} instead of  \cite{spielman2011graph} we obtain the following corollary of Theorem~\ref{sparsifiers}.

\begin{corollary} 
There always exists a graph with    $O(n/\epsilon^2)$ edges that achieves polarization-disagreement index $\mathcal{I}_{G,s}$ within  a  multiplicative $(1 +\epsilon +O(\epsilon^2))$ factor of the optimal for Problem 1. 
\end{corollary}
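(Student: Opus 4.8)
The plan is to reuse the proof of Theorem~\ref{sparsifiers} essentially verbatim, substituting the Batson--Spielman--Srivastava sparsifier \cite{batson2012twice} for the Spielman--Srivastava one \cite{spielman2011graph}: the only input that changes is the sparsifier's edge count, from $O(n\log n/\epsilon^2)$ down to $O(n/\epsilon^2)$, and every other step is insensitive to this. First I would fix an optimal Laplacian $L$ for \eqref{eq:p1}, i.e.\ a Laplacian of a connected graph with $\Tr(L)=2m$ attaining $OPT \myeq \bar s^T(I+L)^{-1}\bar s$, which by Theorem~\ref{objconv} equals the minimum of the polarization--disagreement objective over feasible graphs. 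Then I would invoke \cite{batson2012twice} with error parameter $\epsilon' = \epsilon/2$ to obtain a reweighted subgraph whose combinatorial Laplacian $\tilde L$ has $O(n/(\epsilon')^2)=O(n/\epsilon^2)$ edges and satisfies $(1-\epsilon')\,x^T L x \le x^T \tilde L x \le (1+\epsilon')\,x^T L x$ for all $x\in\mathbb{R}^n$. (The result of \cite{batson2012twice} is most naturally phrased as a one-sided sandwich $L \preceq \tilde L \preceq \kappa L$ with $\kappa = 1+\Theta(\epsilon')$ once the edge count is taken to be $\Theta(n/(\epsilon')^2)$; one global rescaling of $\tilde L$ converts it into the symmetric form above.)

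From this point the three elementary manipulations in the proof of Theorem~\ref{sparsifiers} apply unchanged. Adding $I$ to each term and using $1-\epsilon'\le 1\le 1+\epsilon'$ gives $(1-\epsilon')\,x^T(I+L)x \le x^T(I+\tilde L)x \le (1+\epsilon')\,x^T(I+L)x$; inverting, via operator monotonicity of $X\mapsto X^{-1}$ on positive definite matrices, gives $\tfrac{1}{1+\epsilon'}\,x^T(I+L)^{-1}x \le x^T(I+\tilde L)^{-1}x \le \tfrac{1}{1-\epsilon'}\,x^T(I+L)^{-1}x$; and taking $x = \bar s$ with the expansion $\tfrac{1}{1-\epsilon'}=1+\epsilon'+O((\epsilon')^2)$ yields $\bar s^T(I+\tilde L)^{-1}\bar s \le \bigl(1+\epsilon'+O((\epsilon')^2)\bigr)\,OPT$. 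I would then restore the constraint $\Tr(\tilde L)=2m$ as in Theorem~\ref{sparsifiers}: since $\Tr(\tilde L)\in[(1-\epsilon')2m,(1+\epsilon')2m]$, rescaling $\tilde L$ to have trace exactly $2m$ multiplies it by a constant in $[\tfrac{1}{1+\epsilon'},\tfrac{1}{1-\epsilon'}]$ and hence perturbs $\bar s^T(I+\tilde L)^{-1}\bar s$ by a further factor $1+O(\epsilon')$; absorbing this and recalling $\epsilon'=\epsilon/2$ leaves $\bar s^T(I+\tilde L)^{-1}\bar s \le \bigl(1+\epsilon+O(\epsilon^2)\bigr)\,OPT$. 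The rescaled $\tilde L$ is a feasible point of \eqref{eq:p1} — spectral sparsification preserves connectedness, since $x^T\tilde L x\ge(1-\epsilon')x^T Lx>0$ for every $x\perp\vec 1$ with $x\neq 0$ — so $\bar s^T(I+\tilde L)^{-1}\bar s\ge OPT$ and the approximation is genuinely two-sided, which establishes the claim.

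There is no substantive obstacle: the corollary is a black-box swap of one sparsification theorem for another. The only point that warrants care is bookkeeping around the form of the guarantee in \cite{batson2012twice}, whose distortion $\kappa$ stays bounded away from $1$ unless the edge count is instantiated at $\Theta(n/(\epsilon')^2)$, together with the two rescalings of $\tilde L$ — one to symmetrize the sandwich, one to meet the trace constraint. Since each of these only perturbs the objective multiplicatively by $1+O(\epsilon)$, they are all swallowed by the $O(\epsilon^2)$ slack already present in the statement.
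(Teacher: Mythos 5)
Your proposal is correct and matches the paper's approach exactly: the paper proves the corollary by stating that the same arguments as in Theorem~\ref{sparsifiers} go through verbatim once the Spielman--Srivastava sparsifier is replaced by the Batson--Spielman--Srivastava one, which is precisely the black-box swap you carry out. Your additional bookkeeping (symmetrizing the one-sided sandwich, the trace rescaling, and checking connectivity) only makes explicit details the paper leaves implicit.
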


\subsection{Optimizing over Innate Opinions} 
\label{subsec:innate} 
We next give a mathematical formulation of Problem~\ref{prob3}. We show how to simplify this formulation to obtain a convex optimization program (specifically, an SDP) which can be solved in polynomial time using standard algorithms \cite{boyd2004convex}.

 Equation \eqref{eq:opts}  provides a straight-forward way to model our problem. We wish to minimize the sum of polarization and  disagreement subject to the structure implied by  the dynamics, and a total budget  $\alpha$ on the total change of  innate opinions.  The variable we optimize over is $ds$, the change in opinions. We restrict this change to decrease the innate opinions, i.e., $ds \leq \vec{0}$. 
 
  \begin{tcolorbox}
  \vspace{-.5em}
\begin{align}
\label{eq:opts}
\underline{Problem~\ref{prob3}}
\begin{array}{ll@{}ll}
\text{min}_{ds \in \field{R}^n} & \bar{z}^{T} \bar{z} + \bar{z}^{T}L \bar{z} &   \\
\text{subject to} &z^*  = (I+L)^{-1} (s+ds) & & \\ 
                       & \bar{z} = z^* - \frac{\vec{1}^T z^*}{n} \vec{1} & & \\
					    & \vec{1}^T ds \geq -\alpha & & \\
						&  ds \leq \vec{0} & & \\
						& s+ ds \geq \vec{0} & & \\  
\end{array}
\end{align}
\end{tcolorbox}

\hide{

 \begin{tcolorbox}
\begin{align}
\label{eq:opts2}
\underline{Problem~\ref{prob3}}
\begin{array}{ll@{}ll}
\text{min}_{ds \in \field{R}^n} & \bar{z}^{T} \bar{z} + \bar{z}^{T}L \bar{z} &   \\
\text{subject to} &z^*  = (I+L)^{-1} (s+ds) & & \\ 
                          & \bar{z} = z^* - \frac{1}{n}(\vec{1}^T z^*) \vec{1} & & \\
						& \vec{1}^T ds \leq \alpha & & \\
					    & \vec{1}^T ds \geq -\alpha & & \\
						& s+ ds \leq \vec{1} & & \\
						& s+ ds \geq \vec{0} & & \\  
\end{array}
\end{align}
\end{tcolorbox}
 
 }

 \noindent Our main result is the following proposition.   
\begin{proposition}
\label{prop3}    
The formulation of Problem 2 in~\eqref{eq:opts} is solvable in polynomial time. 
\end{proposition}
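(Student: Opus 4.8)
The plan is to eliminate the auxiliary variables $z^*$ and $\bar z$ using the equality constraints, rewrite the objective purely as a function of $ds$, and then observe that what remains is a convex quadratic program over a polytope (which can furthermore be cast as an SDP). The key structural point — in contrast with Problem~1 and Theorem~\ref{nonconvexobjective} — is that here the graph $G$, hence the Laplacian $L$, is \emph{fixed}; only $ds$ varies, so $(I+L)^{-1}$ is a constant positive definite matrix.

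First I would substitute. Set $s' \myeq s + ds$. The first two constraints of \eqref{eq:opts} say precisely that $\bar z$ is the mean-centered Friedkin--Johnsen equilibrium for innate opinion vector $s'$, so Proposition~\ref{prop2} applied to $s'$ gives $\bar z = (I+L)^{-1}\bar{s'}$, where $\bar{s'} = s' - \frac{\vec{1}^T s'}{n}\vec{1}$. Plugging this into the objective and using the identity derived in the proof of Theorem~\ref{objconv}, namely $\bar z^T\bar z + \bar z^T L\bar z = \bar{s'}^T (I+L)^{-1}\bar{s'}$, the objective becomes $\bar{s'}^T (I+L)^{-1}\bar{s'}$. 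Writing the mean-centering as multiplication by the fixed projection $P \myeq I - \frac{1}{n}\vec{1}\vec{1}^T$, we have $\bar{s'} = P(s+ds)$, so the objective equals $(s+ds)^T M (s+ds)$ with $M \myeq P(I+L)^{-1}P$. Since $(I+L)^{-1}\succeq 0$ and $P$ is symmetric, $M\succeq 0$; hence $(s+ds)^T M(s+ds)$ is a convex quadratic in the affine argument $s+ds$, and therefore a convex function of $ds$. The remaining constraints $\vec{1}^T ds \geq -\alpha$, $ds \leq \vec{0}$, $s+ds\geq\vec{0}$ are all linear in $ds$ and together define a bounded polytope (indeed $-s \leq ds \leq \vec{0}$) which is nonempty, since $ds=\vec{0}$ is feasible. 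Thus \eqref{eq:opts} is a convex QP with a finite, attained optimum, solvable in polynomial time by interior-point methods \cite{boyd2004convex}.

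To match the claim that this is in fact an SDP, I would add a Schur-complement step: introduce a scalar epigraph variable $t$ and minimize $t$ subject to the three linear constraints above together with the linear matrix inequality
\[
\begin{bmatrix} t & (s+ds)^T M^{1/2} \\ M^{1/2}(s+ds) & I \end{bmatrix} \succeq 0,
\]
which (since the bottom-right block is invertible) is equivalent to $t \geq (s+ds)^T M (s+ds)$. This is a semidefinite program in the variables $(t, ds)$, solvable in polynomial time by standard methods. The only genuinely delicate part of the argument is the reduction in the second paragraph — in particular, noting that constancy of $L$ makes $(I+L)^{-1}$ a fixed PSD matrix and that mean-centering is linear, which is exactly what collapses the objective to a convex quadratic; once that is in place, polynomial-time solvability of a convex QP/SDP over a polytope is routine.
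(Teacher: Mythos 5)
Your proposal is correct and follows essentially the same route as the paper: since $L$ is fixed, the equality constraints are eliminated and the objective collapses to a convex quadratic in $ds$ over a nonempty polytope, which is solvable in polynomial time. The only differences are cosmetic --- you retain the mean-centering projection $P$ (the paper silently drops it and writes the reduced objective as $(s+ds)^T(I+L)^{-1}(s+ds)$, which differs from the centered version by the term $(\vec{1}^T(s+ds))^2/n$ but is still convex, so polynomial-time solvability is unaffected either way), and you spell out the Schur-complement SDP reformulation that the paper only alludes to in its preamble.
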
 
 \begin{proof}
 We can simplify the objective of~\eqref{eq:opts} using our analysis from Section~\ref{subsec:topologies}. Specifically, our problem is equivalent to 
 \begin{align}
 \begin{array}{ll@{}ll}
\text{minimize}  & (s+ds)^T (I+L)^{-1} (s+ds) &   \\
\text{subject to} &  ds \leq \vec{0} & & \\
						& \vec{1}^T ds \geq -\alpha & & \\
						& s+ ds \geq \vec{0} & & \\  	
\end{array}
\end{align}
By expanding the above objective, we observe that it is a standard quadratic form $x^TQx+2b^Tx+c$ where $Q \myeq (I+L)^{-1}$ is symmetric positive semidefinite, $b = (I+L)^{-1} s$ and $c = s^T (I+L)^{-1} s$ (note that $b,c$ are fixed with respect to the variable $ds$). This objective is convex, and the set of constraints form a convex set, proving the Proposition.  
\end{proof}
We remark that the above convex formulation \eqref{eq:opts} can accommodate other types constraints, such as restricting $ds$ to be in a specific range of values, or allowing positive changes in the innate opinions. In our experimental section, we use the  basic formulation without additional constraints. 
  
\hide{ 
First we provide a straight-forward formulation of Problem~\ref{prob3}, and then show how to simplify it to obtain a semidefinite program (SDP).  
Therefore, we can solve it by interior point methods, or other standard convex optimization algorithms \cite{boyd2004convex}. 

 Formulation~\ref{eq:opts}  provides a straight-forward way to model our problem. We wish to minimize the sum of polarization and  disagreement subject to the structure implied by  the dynamics, and a total budget  $\alpha$ on the sum of   innate opinions. 

\begin{tcolorbox}
\begin{align}
\label{eq:opts}
\underline{Problem~\ref{prob3}}
\begin{array}{ll@{}ll}
\text{min}_{s \in \field{R}^n} & \bar{z}^{T} \bar{z} + \bar{z}^{T}L \bar{z} &   \\
\text{subject to} &z^*  = (I+L)^{-1} s & & \\ 
                          & \bar{z} = z^* - \frac{1}{n}(\vec{1}^T z^*) \vec{1} & & \\
						& \vec{1}^Ts = \alpha & & \\
\end{array}
\end{align}
\end{tcolorbox}

\noindent Our main result is the following proposition.   
\begin{proposition}
\label{prop3}    
Formulation~\ref{eq:opts} is solvable in polynomial time. 
\end{proposition}

\noindent  By  Proposition~\ref{prop2} we can simplify Formulation~\eqref{eq:opts} as follows. 

 \begin{align}
 \begin{array}{ll@{}ll}
\text{minimize}  & \bar{z}^{T} \bar{z} +\bar{z}^{T} L\bar{z} &   \\
\text{subject to} & \bar{z}  = (I+L)^{-1} \bar{s} & & \\ 
						& \vec{1}^T \bar{s} = 0 & &  \\ 
\end{array}
\end{align}

\noindent Observe that the constraint $\vec{1}^T s = \alpha$ has vanished. Suppose we can solve the above formulation. To get the optimal $s$ from $\bar{s}$, we need to add the mean $\alpha/n$ to each entry of the optimal solution $\bar{s}$.  By setting $Q \myeq  (I+L)^{-1}  \succeq 0  $ we obtain the following equivalent formulation

%\begin{tcolorbox}
\begin{align}
\label{eq:opts2}
\begin{array}{ll@{}ll}
\text{minimize}  & \bar{s}^{T} Q \bar{s} &    \\
\text{subject to} &\vec{1}^T \bar{s} = 0 & &  \\ 
\end{array}
\end{align}
%\end{tcolorbox}

\noindent that we can bring in  a standard SDP form (here $J =\vec{1}\vec{1}^T$)

\begin{tcolorbox}
\begin{align}
\label{eq:optsSDP}
\begin{array}{ll@{}ll}
\text{minimize}  &  Q  \bullet (ss^T)&    \text{  [SDP]} \\
\text{subject to} & J \bullet (ss^T) = 0 & &  \\ 
\end{array}
\end{align}
\end{tcolorbox}

\noindent Here, the inner product $A \bullet B$ is equal to the trace  $\Tr{(B^TA)}$. Formulation~{eq:optsSDP} is solvable in polynomial time with interior point methods \cite{boyd2004convex}.  
}

\section{Experimental results}
\label{sec:exp}
\subsection{Experimental Setup}
\label{subsec:setup} 
 
\spara{Datasets.} We use two datasets collected by De et al. \cite{de2014learning}. Specifically,  De et al.  collect text produced by users, and interactions between them \cite{de2014learning}. The text is mapped into opinions using NLP tools \cite{pennebaker2001linguistic}, and the interactions are used to create a network between users. There are many preprocessing details, which can be found in \cite{de2014learning}. Here, we provide a brief description of both datasets. 

{\em Twitter}  is a network with  $n=548$ nodes, and $m=$ 3,638 edges. Edges correspond to user interactions.  All nodes' opinions are available to us. This dataset was collected with the aim of analyzing the debate on Twitter about the Delhi legislative assembly elections of 2013. This was an irresolute event with three major parties winning roughly equal shares of the vote. The hashtags used to collect tweets were \#BJP, \#APP, \#Congress, \#Polls2013, tweeted over the period December 9th to 15th.

{\em Reddit} is a network with $n=556$ and $m=$ 8,969 edges. There is an edge between two users if there exist two subreddits (other than politics) that both users posted in during the given time period.  The topic of interest is politics.  

Aside from the above real-world datasets, we use a wide variety of synthetic data. Since most real-world networks have a skewed degree distributions, we focus on power law random networks generated using the Norros-Reittu model \cite{norros2006conditionally}, an important random graph model  that produces networks that mimic real-world networks in some respects \cite{aiello2000random,frieze2015introduction}. We also use skewed distributions for generating opinions. We use the {\em randht.m} 	file by Aaron Clauset \cite{clauset2009power} to generate opinions according to a power law with a given slope. We normalize opinions to the range $[0,1]$ by dividing by the maximum observed value (i.e., there is always a node with opinion 1).

\spara{Machine specs.} All experiments were run on a laptop with 1.7 GHz Intel Core i7 processor and 8GB of main memory.  

\spara{Code.} Our code was written in Matlab. Our code is publicly available at  \url{https://github.com/tsourolampis/polarization-disagreement}.

\subsection{Experimental Findings}
\label{subsec:findings}

  \begin{table}[t]
\centering \small
\begin{tabular}{r|cccc|cc|}
\multicolumn{1}{c}{} &  \multicolumn{4}{c}{} &  \multicolumn{2}{c}{Proposed method} \\
\cline{2-7}
\multicolumn{1}{c}{} &
\multicolumn{1}{|c}{ER(0.5)} &
\multicolumn{1}{c}{PL(2)} &
\multicolumn{1}{c}{PL(2.5)} &
\multicolumn{1}{c|}{PL(3)} &
\multicolumn{1}{c}{$L^*$} &
\multicolumn{1}{c|}{$\tilde{L}^*$-sparsified} \\ \cline{2-7}
$s \sim PL(1.5)$    & 14.38  & 16.10 &   22.06  &   53.05 & 11.60 & 11.60 \\ 
$s \sim PL(2)$      & 25.98 &  45.16 &    72.11  &   107.23 & 19.24 & 19.27 \\ 
$s \sim PL(2.5)$  & 94.87 & 103.62 &    121.21 & 166.38 & 85.55 & 85.56 \\ 
\cline{2-7}
\end{tabular}
\vspace{.3em}
\caption{\label{tab:learng} Polarization-disagreement minimization for random power law opinions vectors.  Average polarization-disagreement indices $\mathcal{I}_{G,s}$ over 5 experiments for Erd\"{o}s-R\'{e}nyi, random power law graphs, and the optimal graphs $L^*$ and $\tilde L^*$-sparsified output by our solutions to Problem 1.  Rows correspond to  generating innate opinions $s$ according to a power law distribution with slopes 1.5, 2, and 2.5 respectively.  %The resulting number of edges of $L^*$ is in all cases is 4950 (complete graph). The sparsified versions of the optimal solution have in average 761, 697, and 700 edges respectively.  
For details see Section~\ref{subsec:findings}.}
\end{table}

\begin{figure*}[!ht]
\centering
\begin{tabular}{@{}c@{}@{\ }c@{}@{}c@{}@{\ }c@{}}
\includegraphics[width=0.23\textwidth]{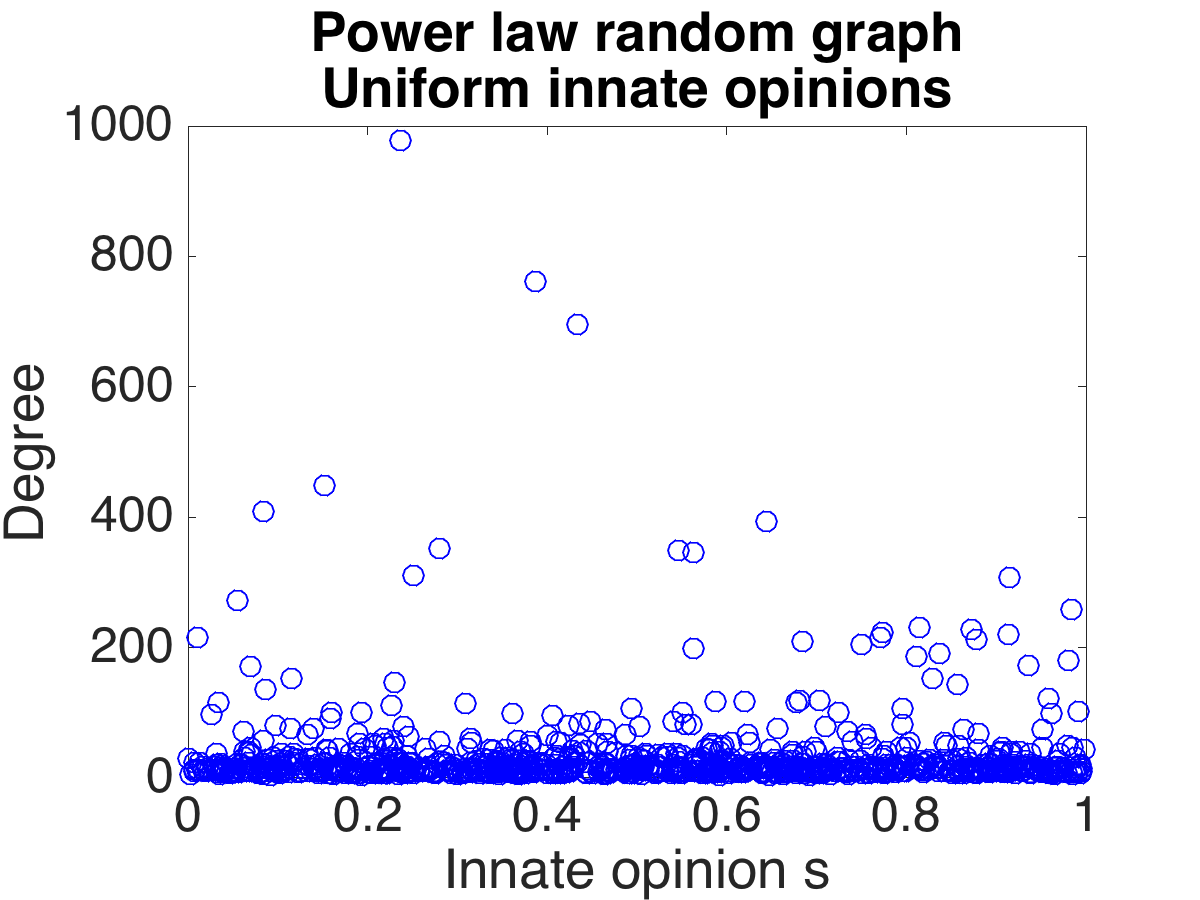}  &  \includegraphics[width=0.23\textwidth]{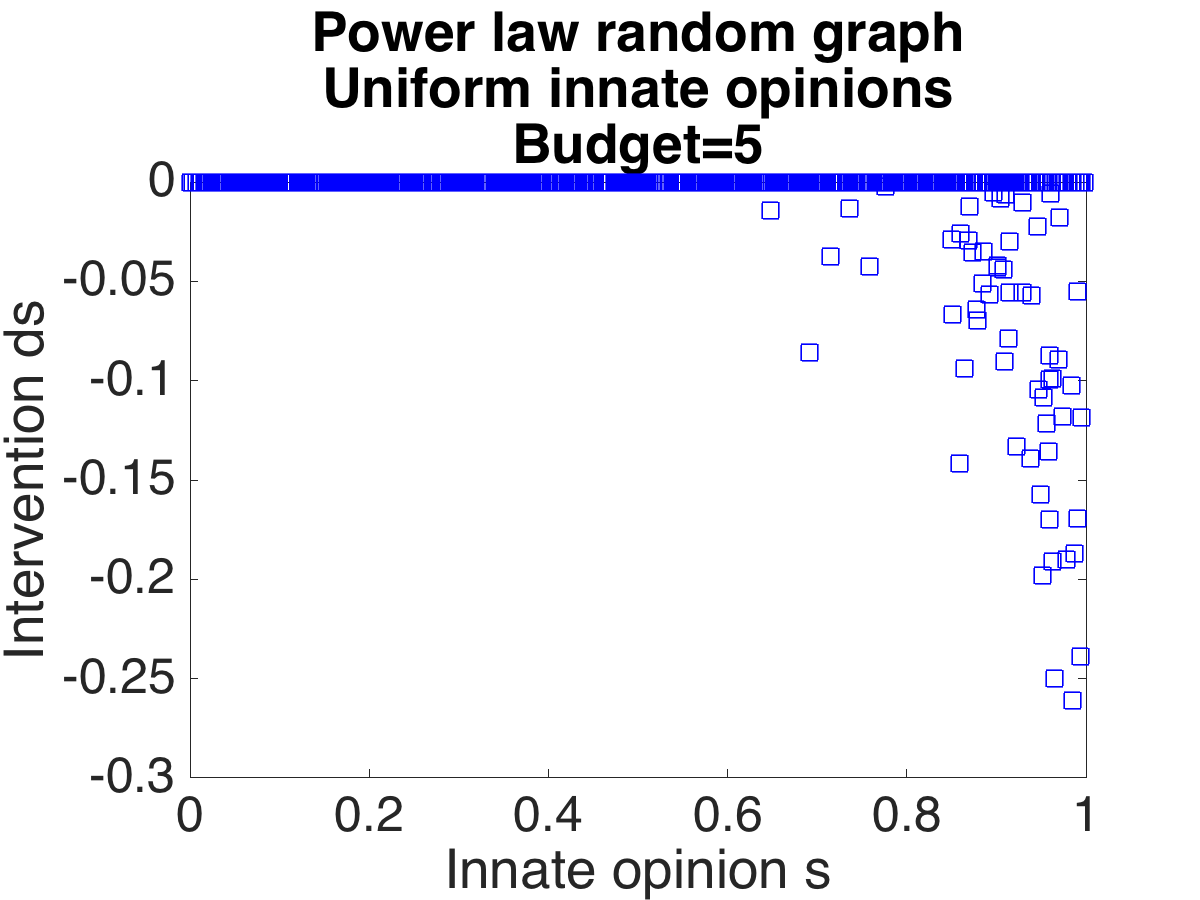} & \includegraphics[width=0.23\textwidth]{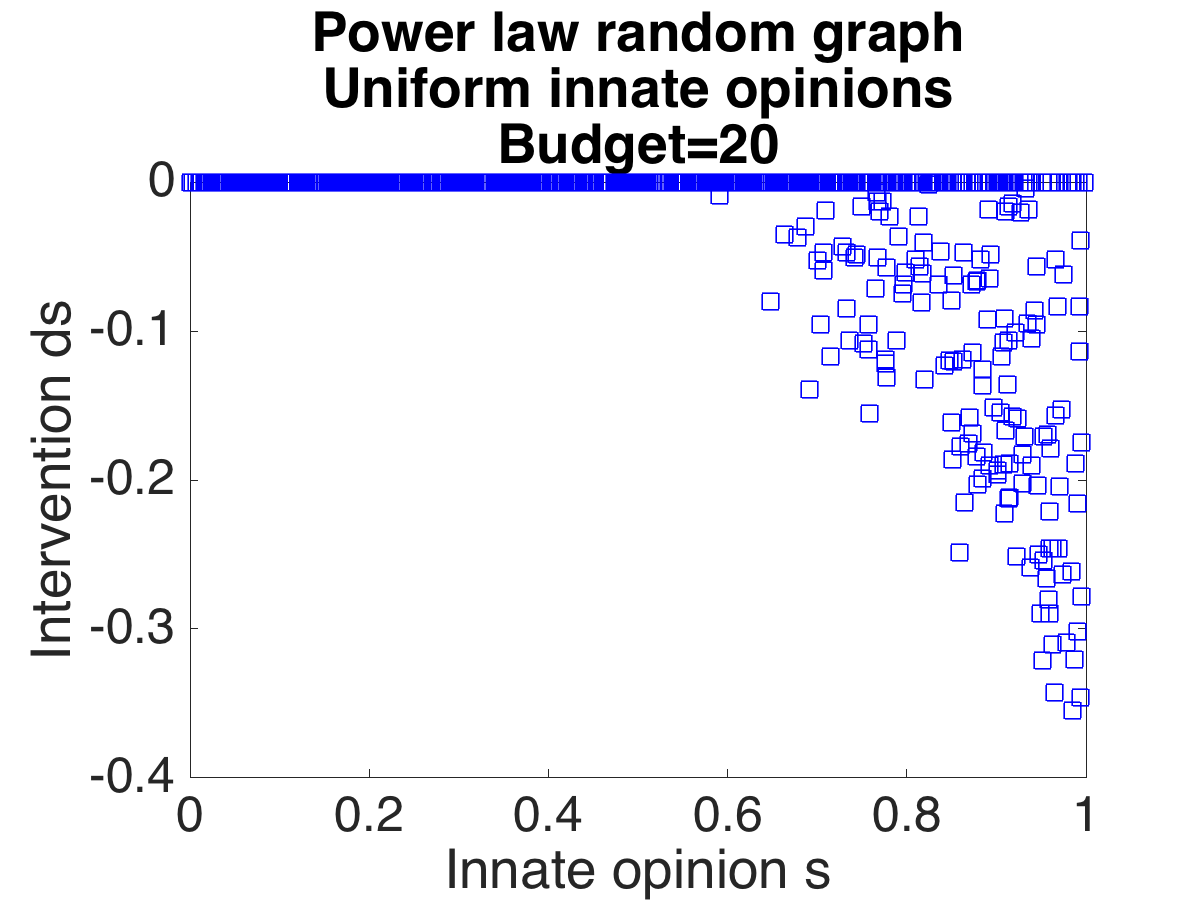}&  \includegraphics[width=0.23\textwidth]{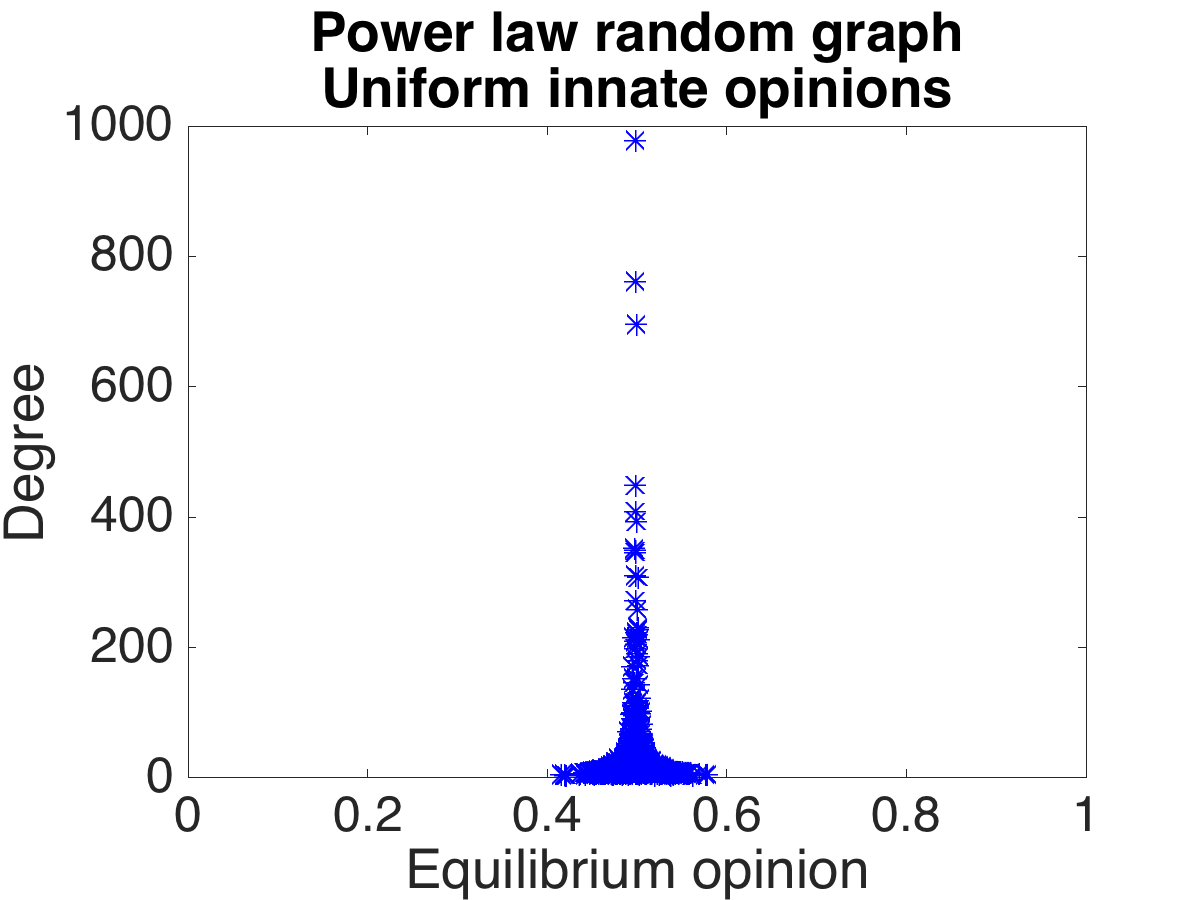}  \\
(a) &  (b) &  (c) &  (d) \\ 
\includegraphics[width=0.23\textwidth]{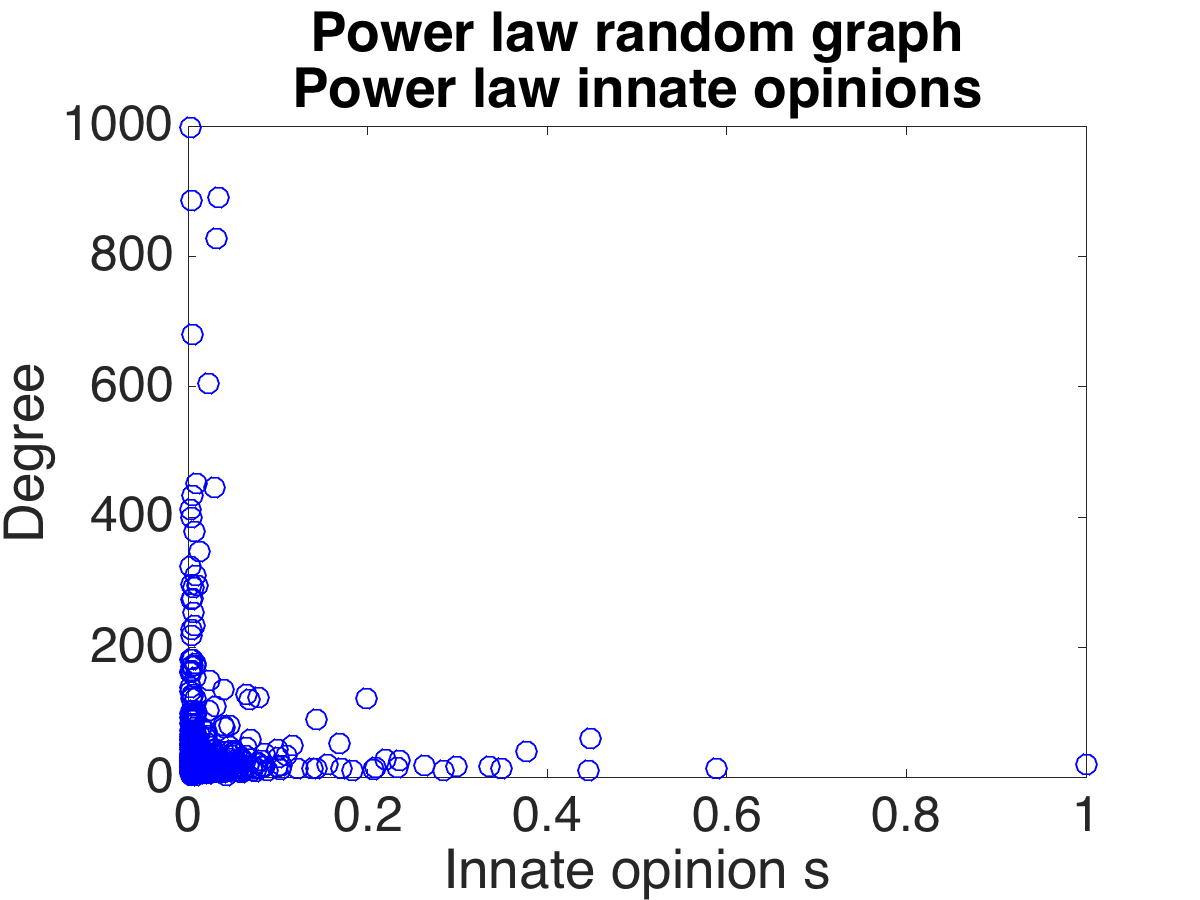}  &  \includegraphics[width=0.23\textwidth]{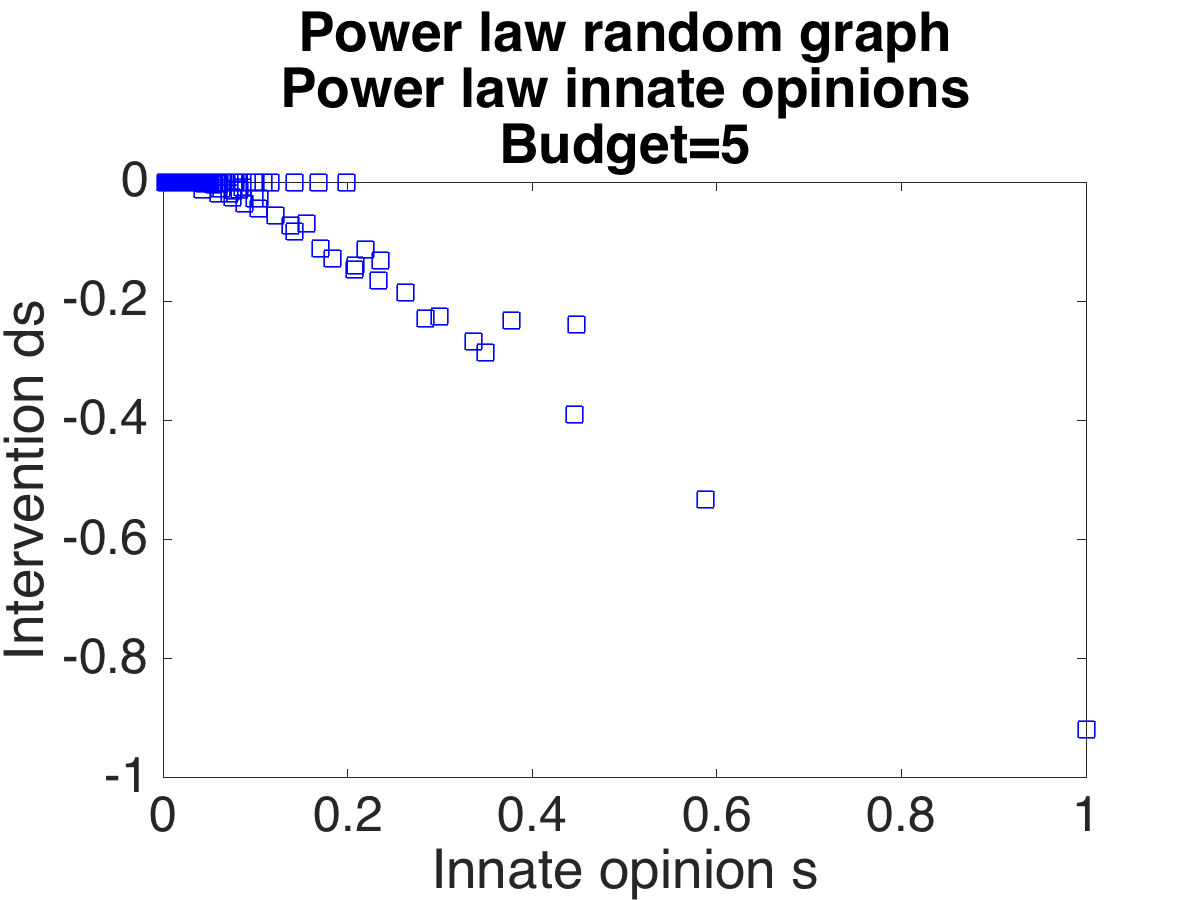} & \includegraphics[width=0.23\textwidth]{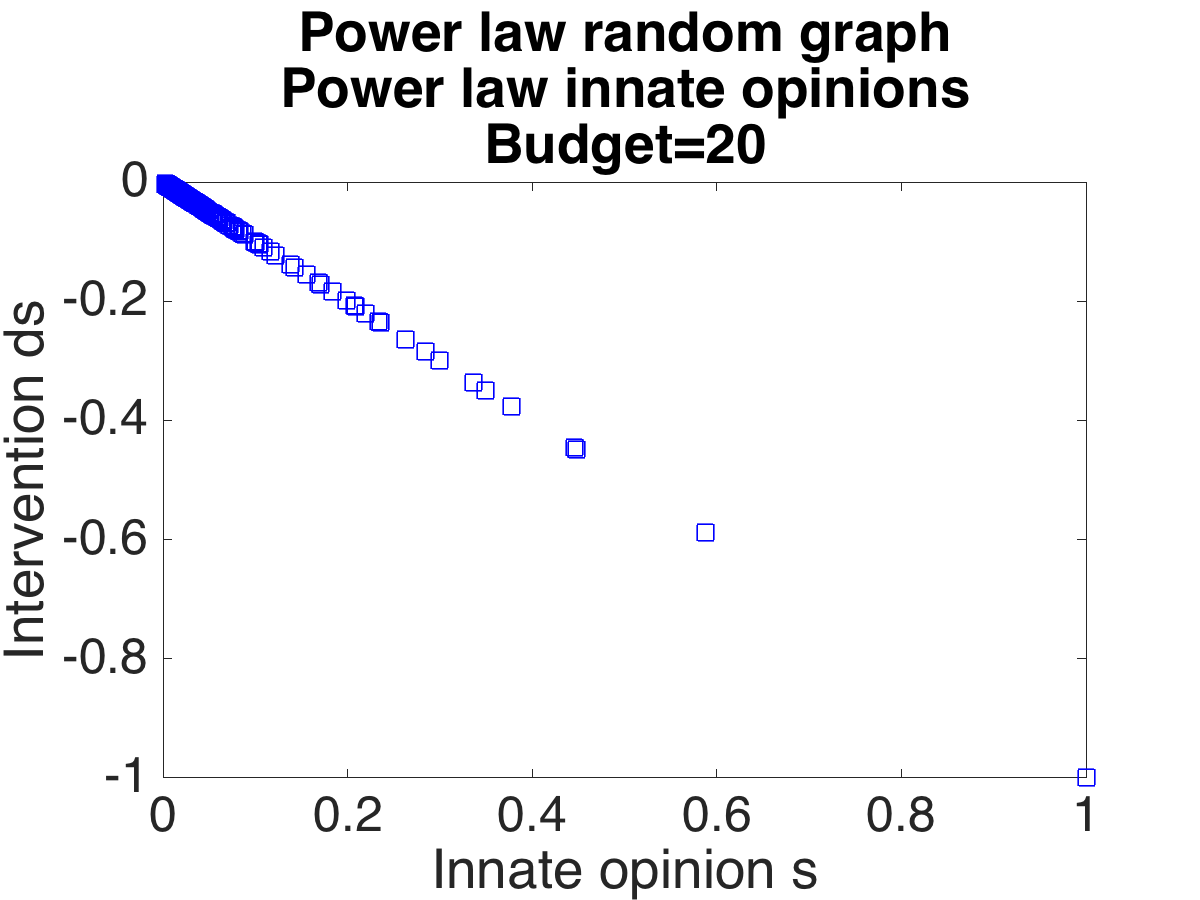}&  \includegraphics[width=0.23\textwidth]{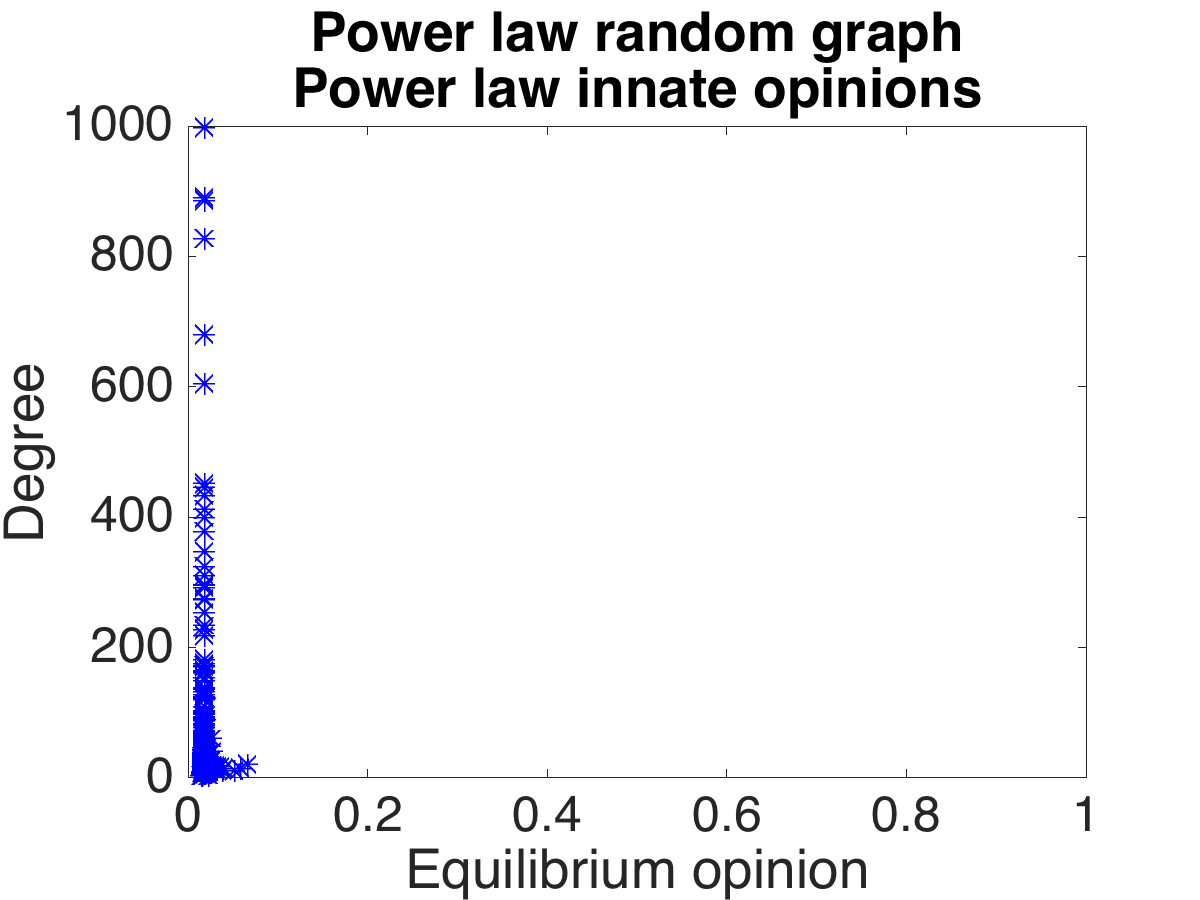}  \\
(e) & (f) & (g) & (h) \\ 
\includegraphics[width=0.23\textwidth]{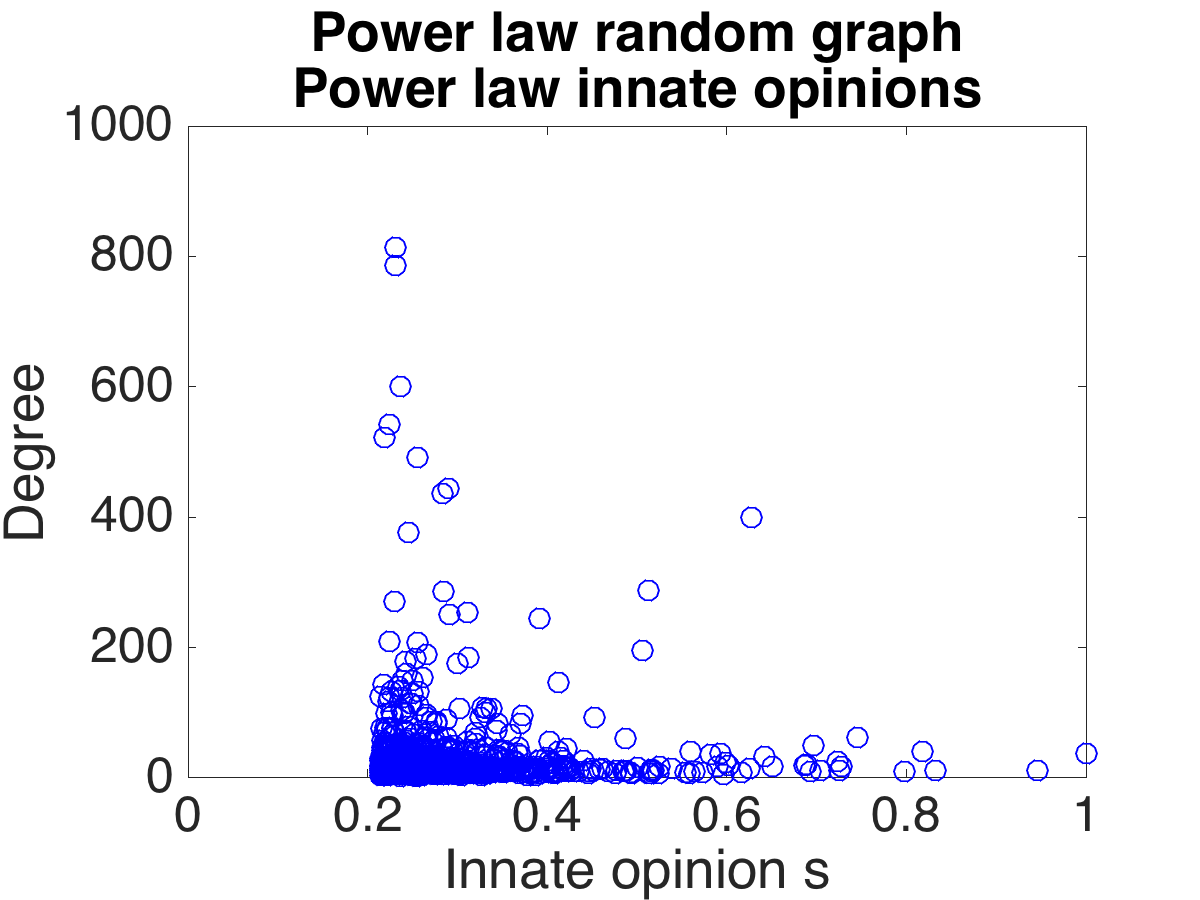}  &  \includegraphics[width=0.23\textwidth]{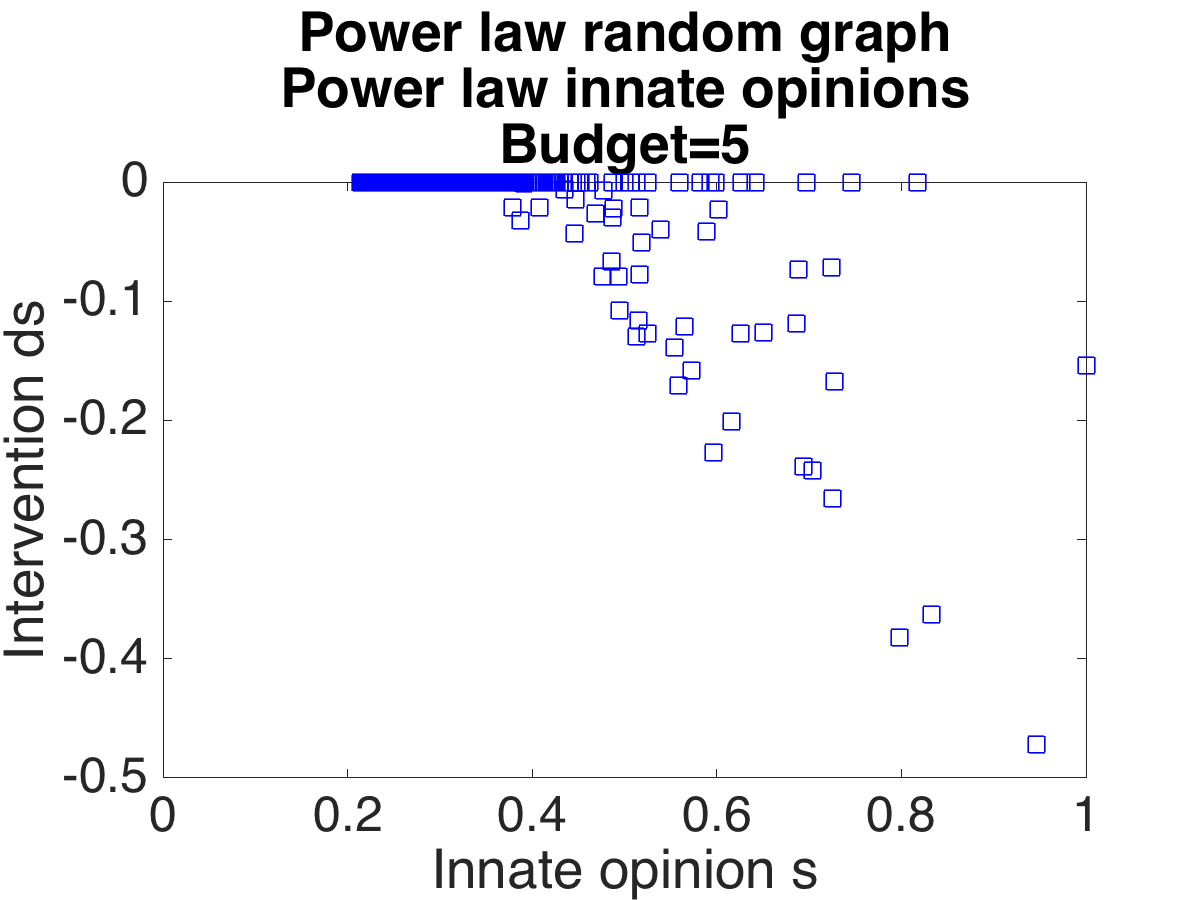} & \includegraphics[width=0.23\textwidth]{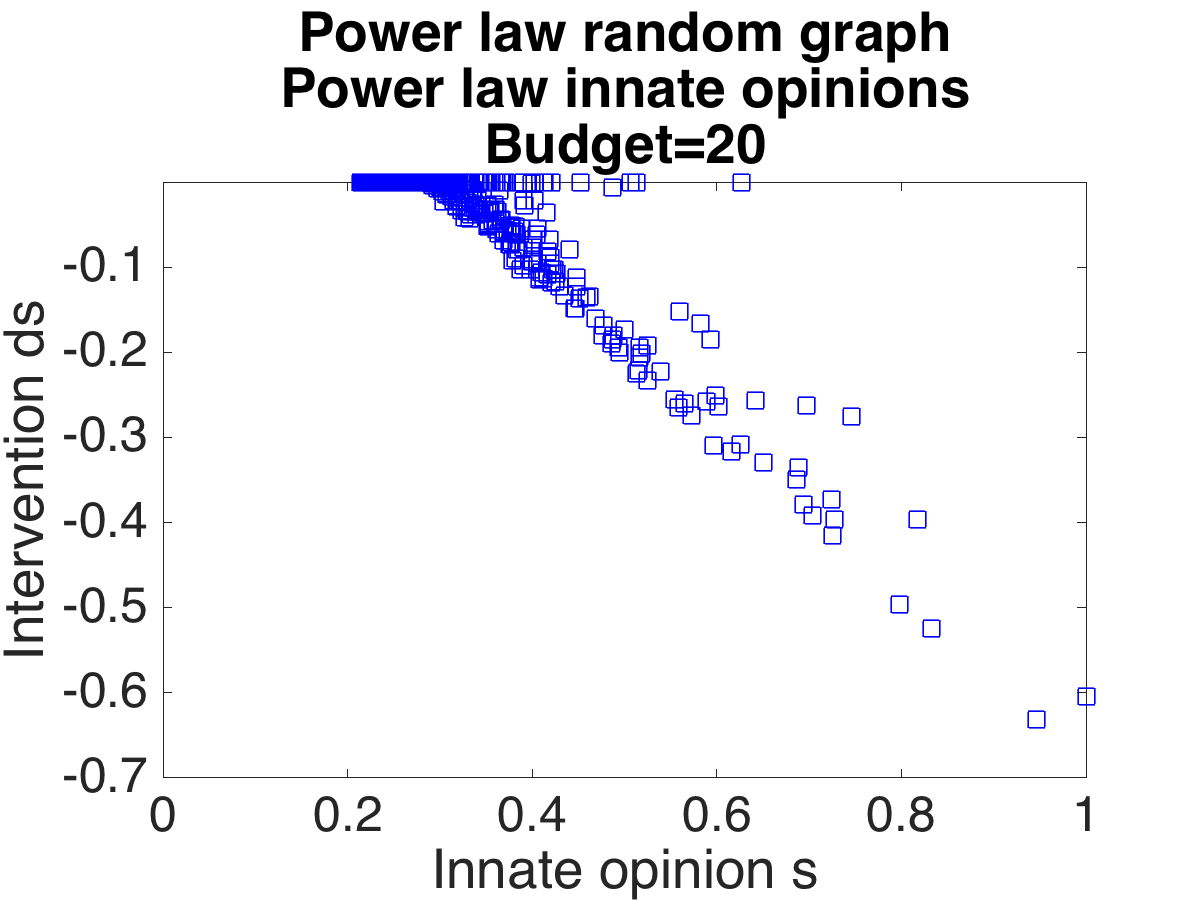}&  \includegraphics[width=0.23\textwidth]{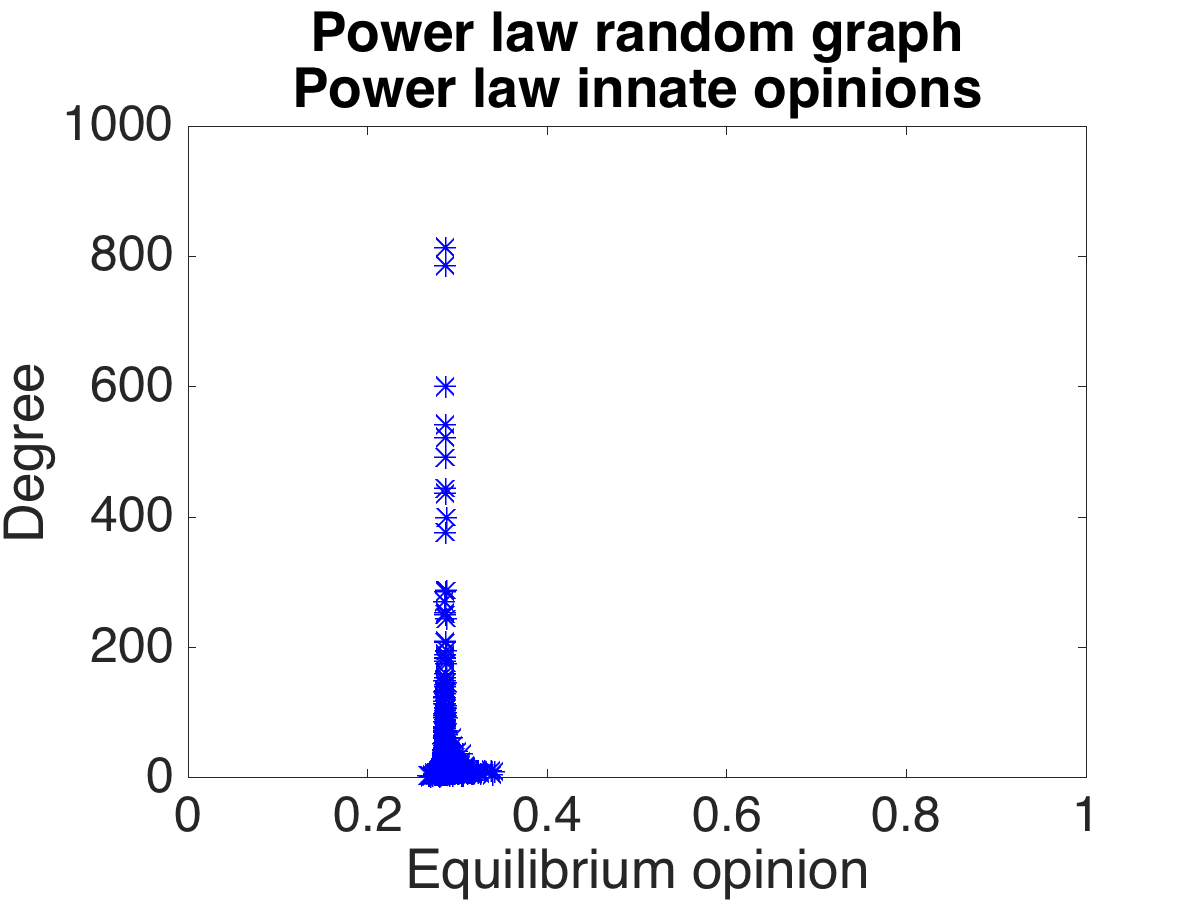}  \\
(i) & (j) & (k) & ($\ell$) \\  
\includegraphics[width=0.23\textwidth]{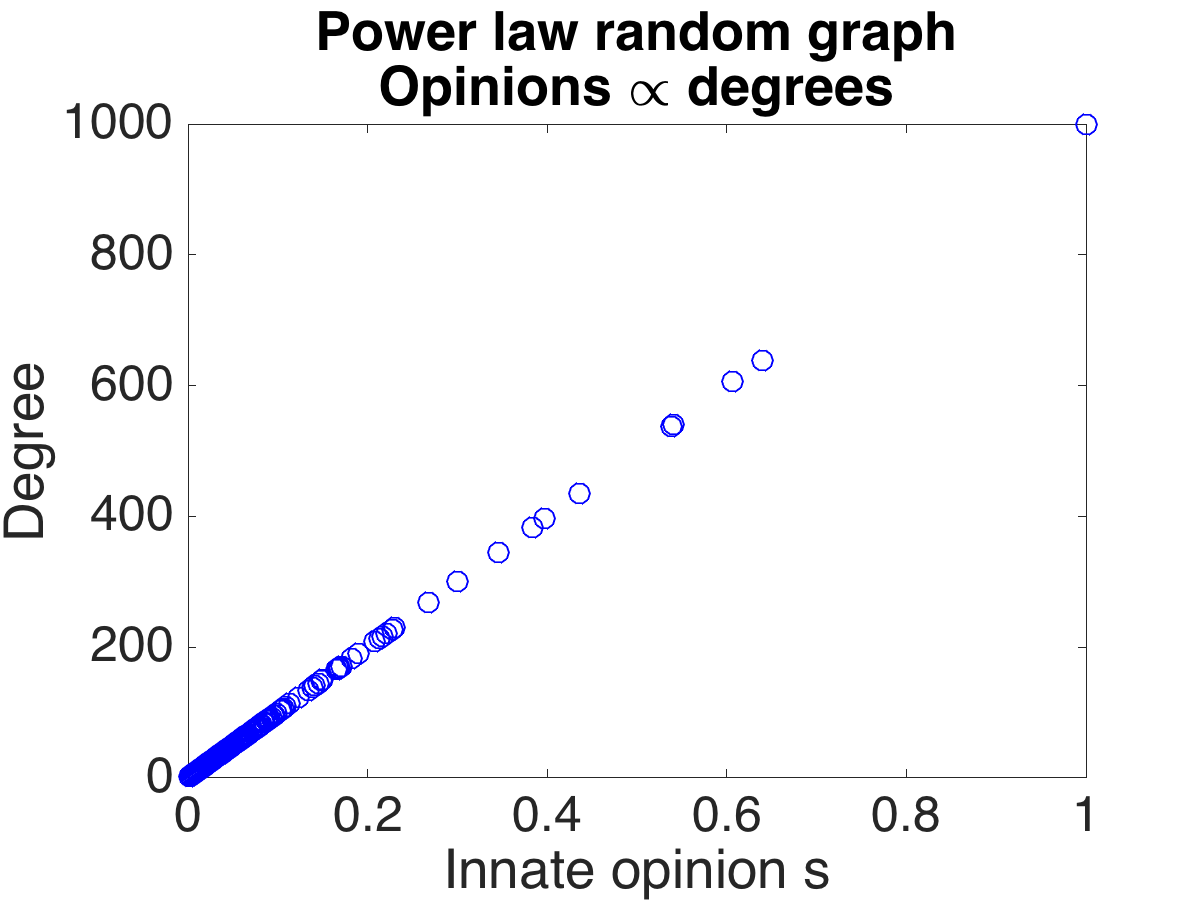}  &  \includegraphics[width=0.23\textwidth]{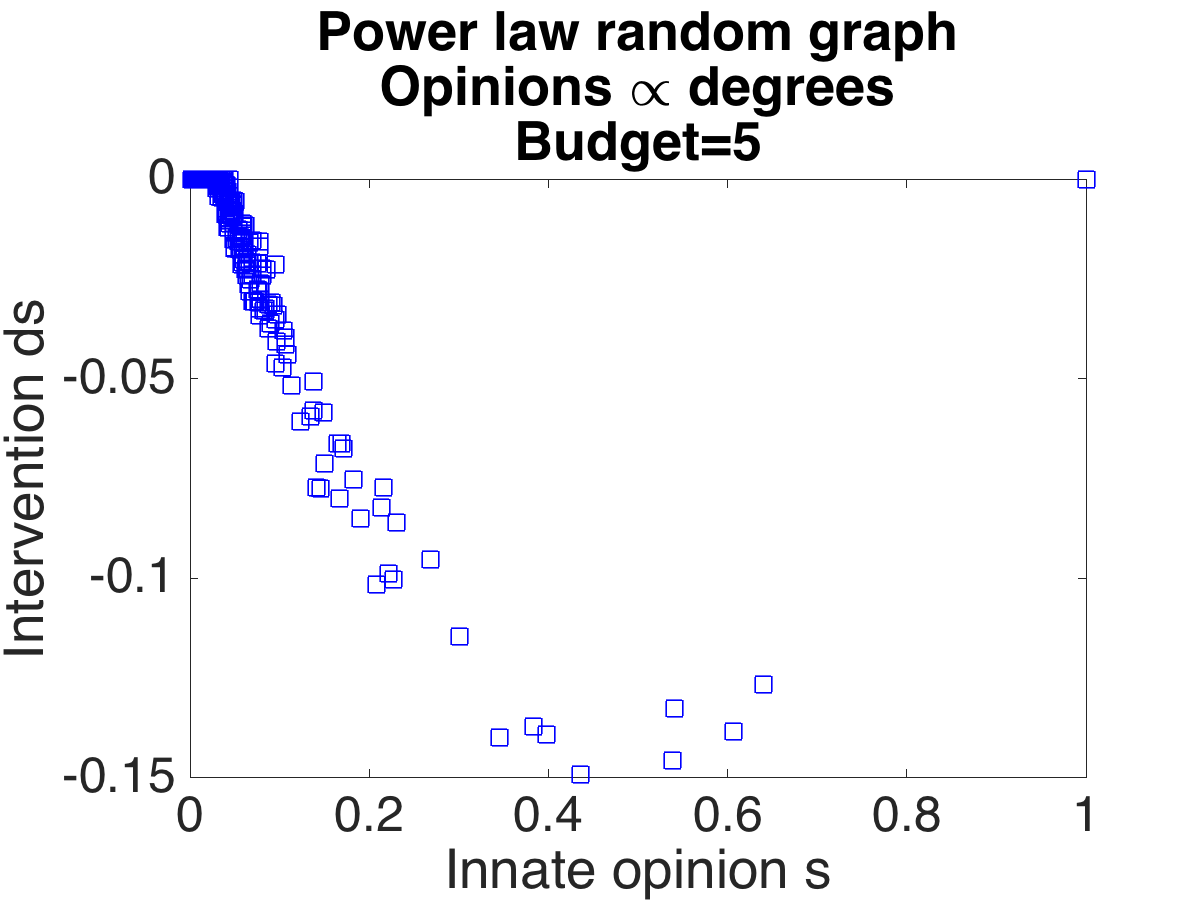} & \includegraphics[width=0.23\textwidth]{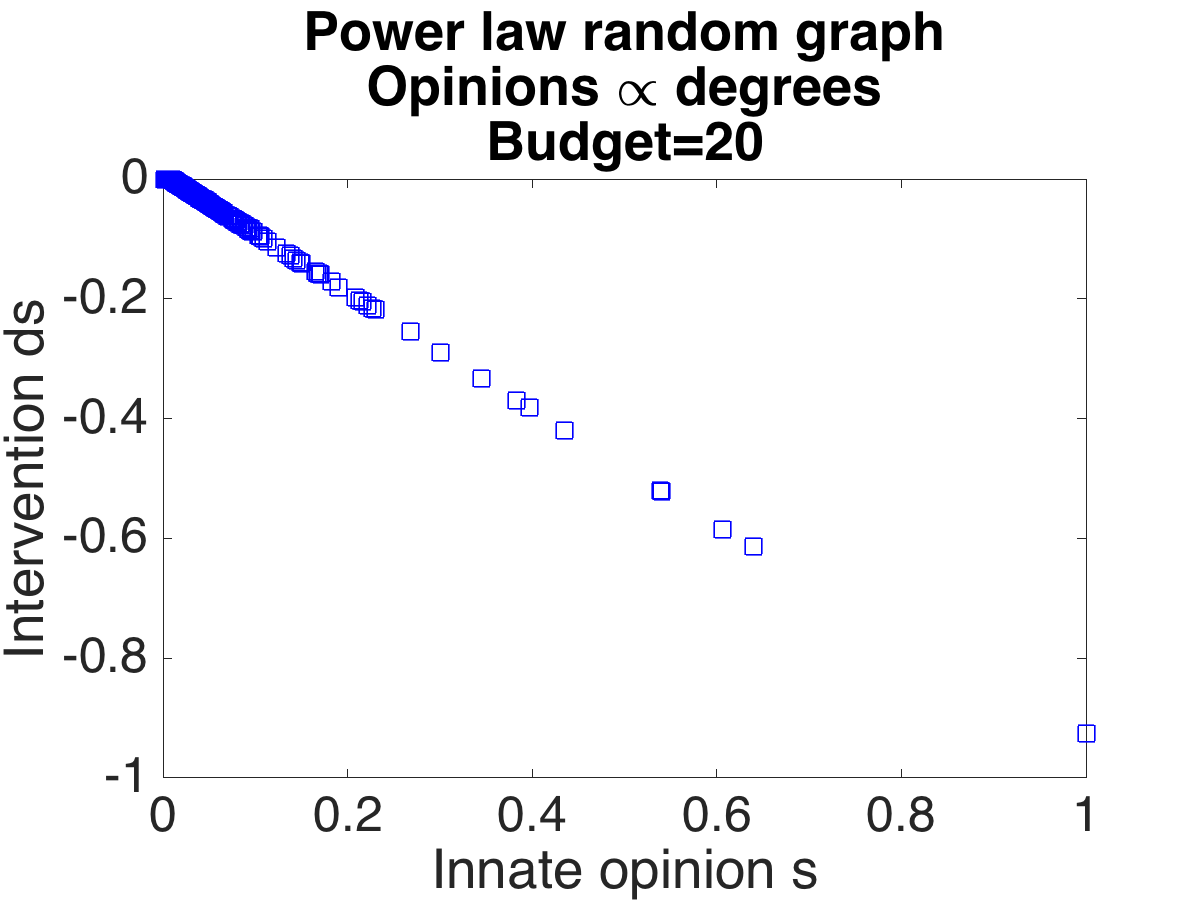}&  \includegraphics[width=0.23\textwidth]{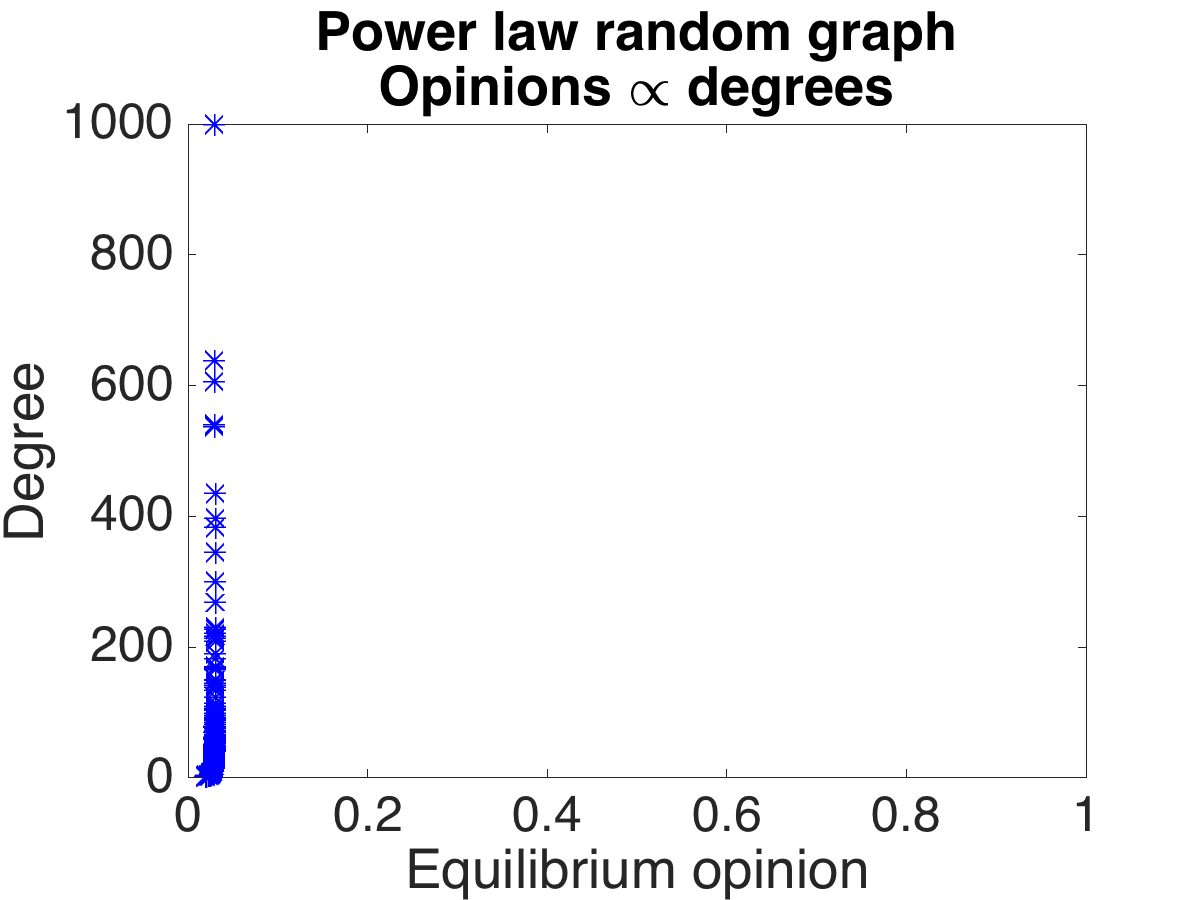}  \\
(m) & (n) & (o) & (p) \\  
\end{tabular}
\caption{\label{fig:synthetic1}  Optimizing over opinion interventions for varying innate opinions distributions on random power law graph topologies generated according to the Norros-Reittu model \cite{norros2006conditionally}. For details, see Section~\ref{subsec:findings}.}
\end{figure*}

\spara{Synthetic experiments.}   Table~\ref{tab:learng} shows our results on learning optimal graph topologies (i.e. solving Problem 1) for various innate opinion 
vectors averaged over 5 experiments. Each row corresponds to sampling a vector of 100 innate opinions according to the power law distribution with slopes 1.5, 2, and 2.5 respectively. Column $L^*$  shows the objective value achieved by our optimization algorithm, and the last column shows the $\tilde{L}^*$-sparsified the objective value after we sparsify $L^*$ using the Spielman-Srivastava algorithm (see Theorem \ref{sparsifiers}). We observe a negligible loss, but achieve a dramatic reduction in the number of edges. The optimal solution $L^*$ in all experiments corresponds to a weighted clique, i.e., there are 4950 non-zero edges.   The sparsified versions of the optimal solution have on average 761, 697, and 700 edges for the three different distributions on $s$ respectively.  

For comparison, in Table~\ref{tab:learng} we also show the polarization-disagreement index for several randomly generated graphs. The first column shows the results obtained by a random binomial graph, i.e., an Erd\"{o}s-R\'{e}nyi graph. We use $p=0.5$ as the underlying edge probability.  We also generate power law networks according to the Norros-Reittu model with slopes 2, 2.5, and 3 respectively.  In general, we observe that random binomial graphs achieve a close-to-optimal performance, in contrast to power law networks.  Understanding this observation using formal random graph theory \cite{frieze2015introduction} is an interesting direction for future research.  We also observe empirically that the polarization-disagreement index increases as the slope of the random network increases from 2 to 3.

Figure~\ref{fig:synthetic1} shows our experimental findings when we optimize over innate opinions. We generate power-law random graphs using the Norros-Reittu model with slope 2, a slope that lies close to the slopes of several real-world networks \cite{newman2005power}.  Similarly we generate random initial opinion vectors according to various distributions.   The first row (Figures~\ref{fig:synthetic1}(a),(b),(c),(d)) corresponds to uniform innate opinions. The second row (Figures~\ref{fig:synthetic1}(e),(f),(g),(h)) to power law opinions with slope 2, and the third row (Figures~\ref{fig:synthetic1}(i),(j),(k),($\ell$)) to slope 5. For the fourth row  (Figures~\ref{fig:synthetic1}(m),(n),(o),(p)) we set the opinion of each  node $v$ deterministically to $\frac{deg(v)}{\sum_u deg(u)}$.  This setting allows us to clearly see if the algorithm chooses to change the innate opinions of only the highest degree nodes, or if the graph topology forces the algorithm to choose other nodes (which is actually the case). 
 The first column (Figures~\ref{fig:synthetic1}(a),(e),(i),(m)) plots the degree of each node versus its innate opinion.   Figures~\ref{fig:synthetic1}(b)(f)(j),(n), and   Figures~\ref{fig:synthetic1}(c),(g),(k),(o) plot the optimal  change $ds$ for each node versus its innate opinion for budgets $\alpha=5$, and $\alpha=20$ respectively. We observe that the optimization algorithm tends to prefer to reduce high innate opinions. However we can also observe that graph topology plays a key role, as in certain cases it changes opinions of nodes with lower innate opinions even if it could choose higher innate opinions. This is obvious in all figures, but most striking in Figure~\ref{fig:synthetic1}(n); the algorithm does not modify the node with innate opinion 1 when the budget is 5, but only modifies other nodes.  
Similar remarks can be made for the rest of the figures.  This fact is more pronounced when $\alpha=5$ than when $\alpha=20$. since as the budget $\alpha$ increases, the algorithm can always reduce the values of the highest innate opinions, caring less and less about the graph topology.  Finally, Figures~\ref{fig:synthetic1}(d),(h), ($\ell$), (p) plot the  opinion vector at equilibrium after our algorithm's intervention (i.e, $z = (I+L)^{-1}(s+ds)$), when the budget is 20.  We observe that in all cases opinions get closer, but the actual value that they concentrate around depends on the distribution. For instance, in Figure~\ref{fig:synthetic1}(d) we observe concentration around 0.5 but in  Figure ~\ref{fig:synthetic1}(h) around 0. This should not be very surprising; the average of the uniform opinion distribution is 0.5, whereas when $s$ is distributed as a power law distribution, most of the values initially are close to 0.
The results presented in Figure~\ref{fig:synthetic1} are representative observations across many experiments.

\begin{table}[!htp]
\centering
\begin{tabular}{|cc|cc|}
\multicolumn{2}{c}{Twitter}  & \multicolumn{2}{c}{Reddit}  \\   \hline
$\mathcal{I}_{Twitter,s}$  & 199.84 & $\mathcal{I}_{Reddit,s}$  &  138.02 \\ 
\# Edges &  3,638   & \# Edges  &  8,969  \\ \hline 
$\mathcal{I}_{G^*,s}$ & 30.113 & $\mathcal{I}_{G^*,s}$ &  0.0022  \\ 
\# Edges & 120,000 & \# Edges &  103,050 \\  \hline 
$\mathcal{I}_{\tilde{G}^*,s}$ & 30.114 & $\mathcal{I}_{\tilde{G}^*,s}$ &  0.0022 \\
\# Edges & 3,455 &   \# Edges &  7,521   \\  \hline 
\end{tabular}
\vspace{.3em}
\caption{\label{tab:twitter} Optimizing over graph topologies. First row, shows the objective, and the number of edges for the Twitter and Reddit networks. Second row shows  the objective and the number of edges for the optimal solution. Third row shows the   objective and the number of edges for the sparsified optimal solution. For details, see Section~\ref{subsec:findings}. }
\end{table}

\spara{Real-world experiments.}   Table~\ref{tab:twitter} summarizes our experimental results for the Twitter and Reddit datasets. Given the Twitter network among the users, and their innate opinions, the polarization-disagreement index $\mathcal{I}_{G,s}$  is equal to 199.84.  
By optimizing over graph topologies, we find a graph $G^*$ with the same total weight 8,969 as the original graph that reduces the polarization-disagreement index to 30.113. However, this optimal graph is  dense; the total weight is spread over 120,000  edges. By sparsifying using effective resistances according to Spielman-Srivastava, we obtain a sparse graph using just 3,455 edges. The polarization-disagreement index  is equal to 30.114. Again, we observe that there is a negligible change in the objective value (on the order of $10^{-3}$), but there is a drastic reduction in the number of edges.   

The results for Reddit are impressive; the polarization-disagreement index  reduces from 199.84 to 0.0022.  Both the original graph and the optimal graph have the same total weight. The  weight of the optimal graph is spread over 103,050  edges, but by sparsifying we get a graph with 7,521 edges  with polarization-disagreement index  0.0022; the difference from the optimal value is $
<10^{-5}$.

\begin{figure*}[!ht]
\centering 
\begin{tabular}{cc}
\includegraphics[width=0.4\textwidth]{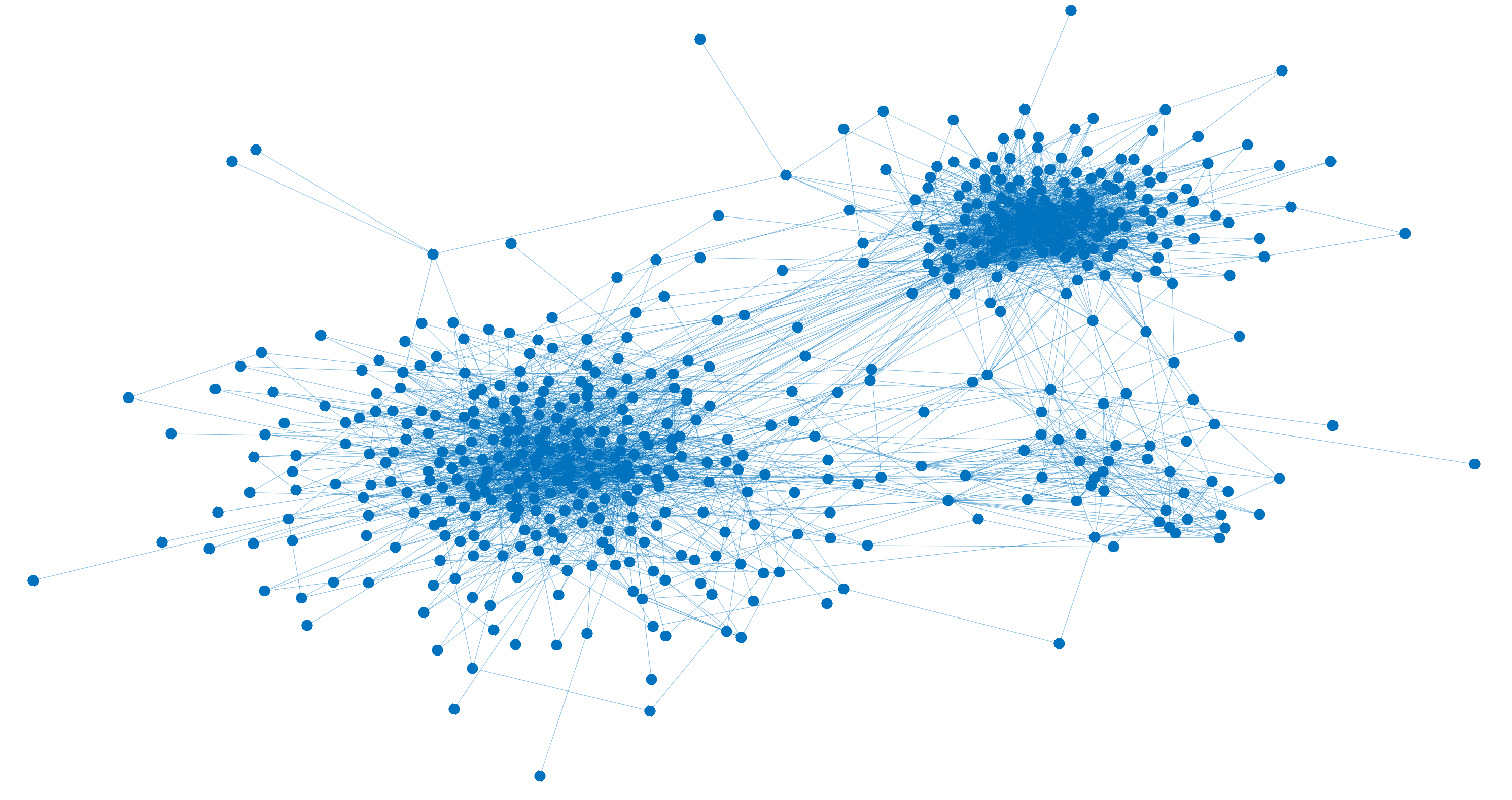} & \includegraphics[width=0.25\textwidth]{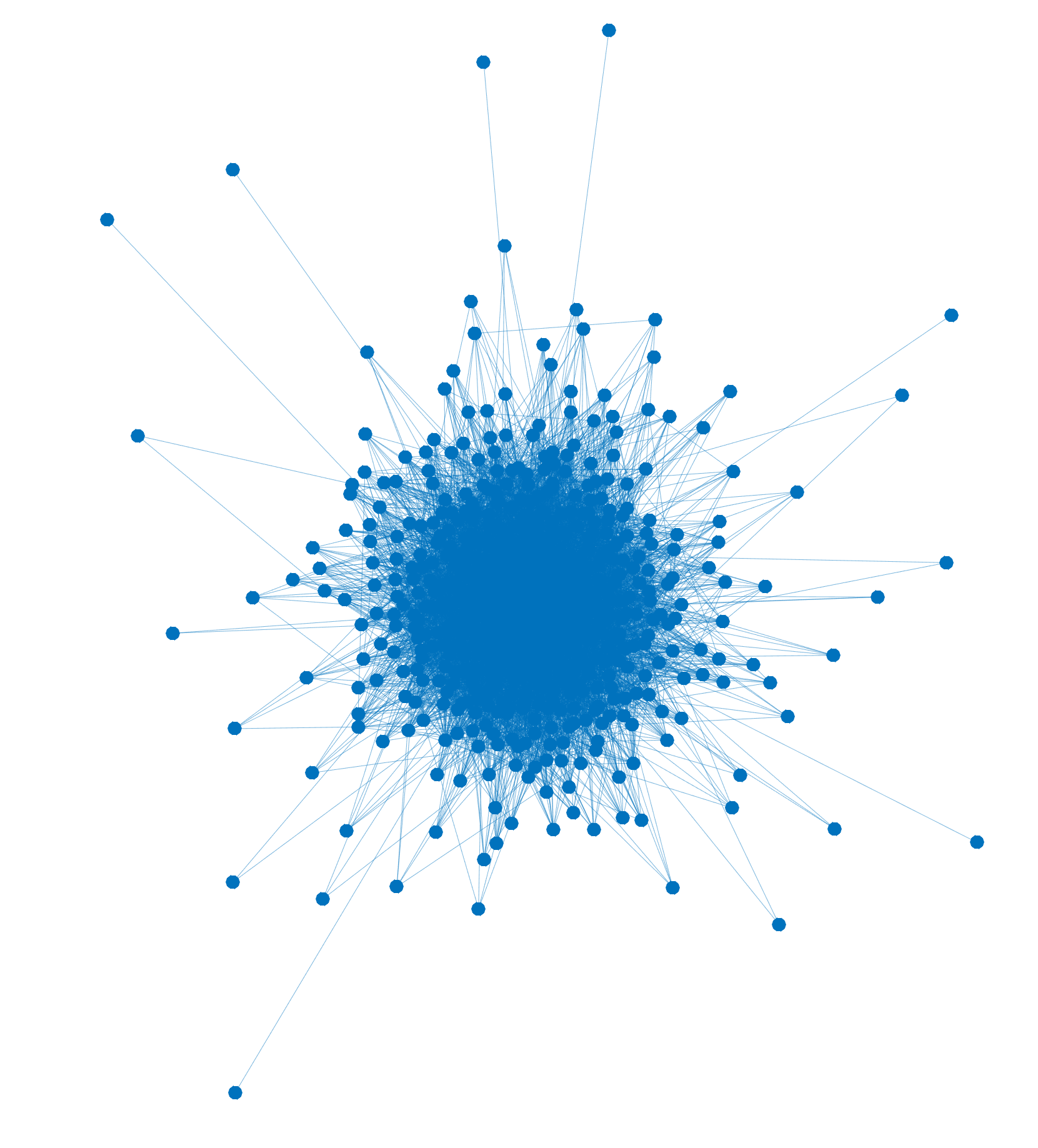} \\ 
(a) & (b) \\ 
\includegraphics[width=0.45\textwidth]{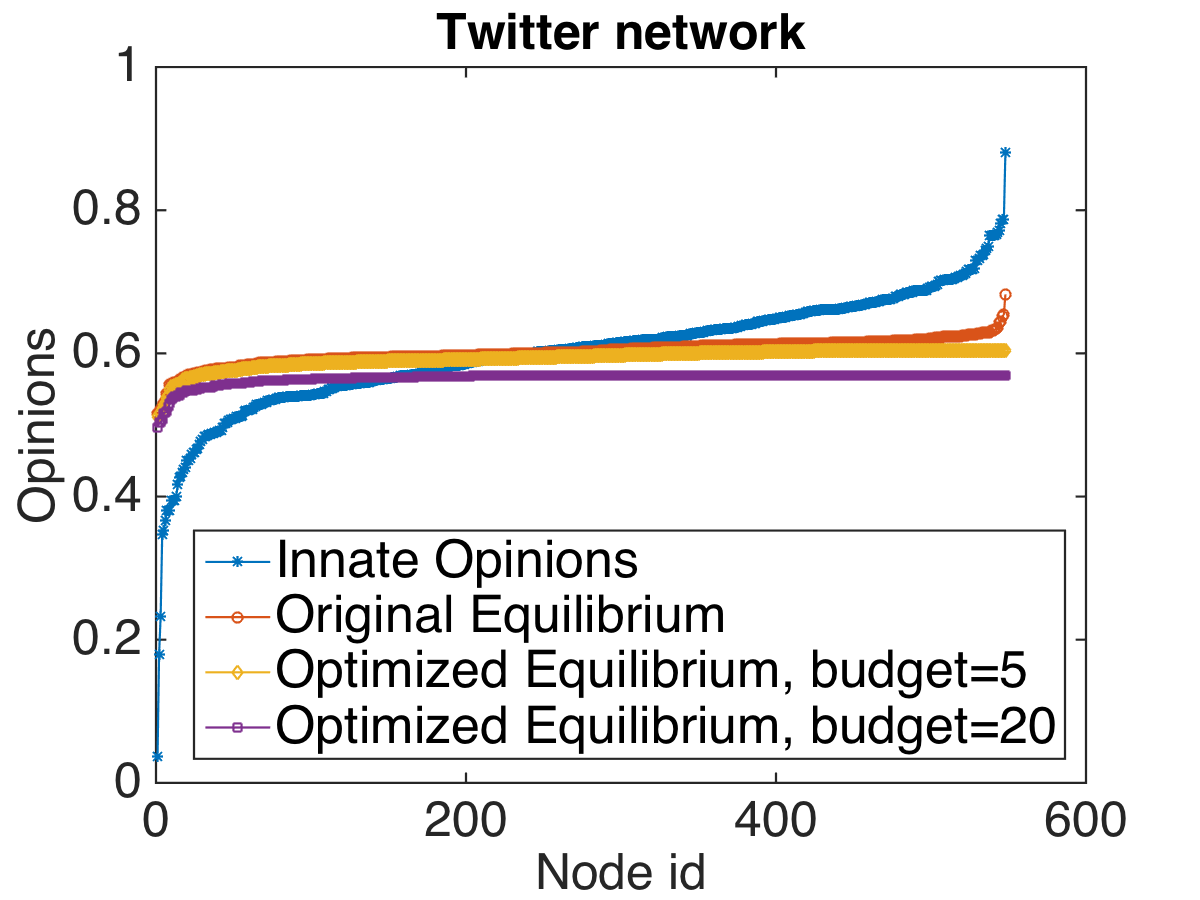} & \includegraphics[width=0.45\textwidth]{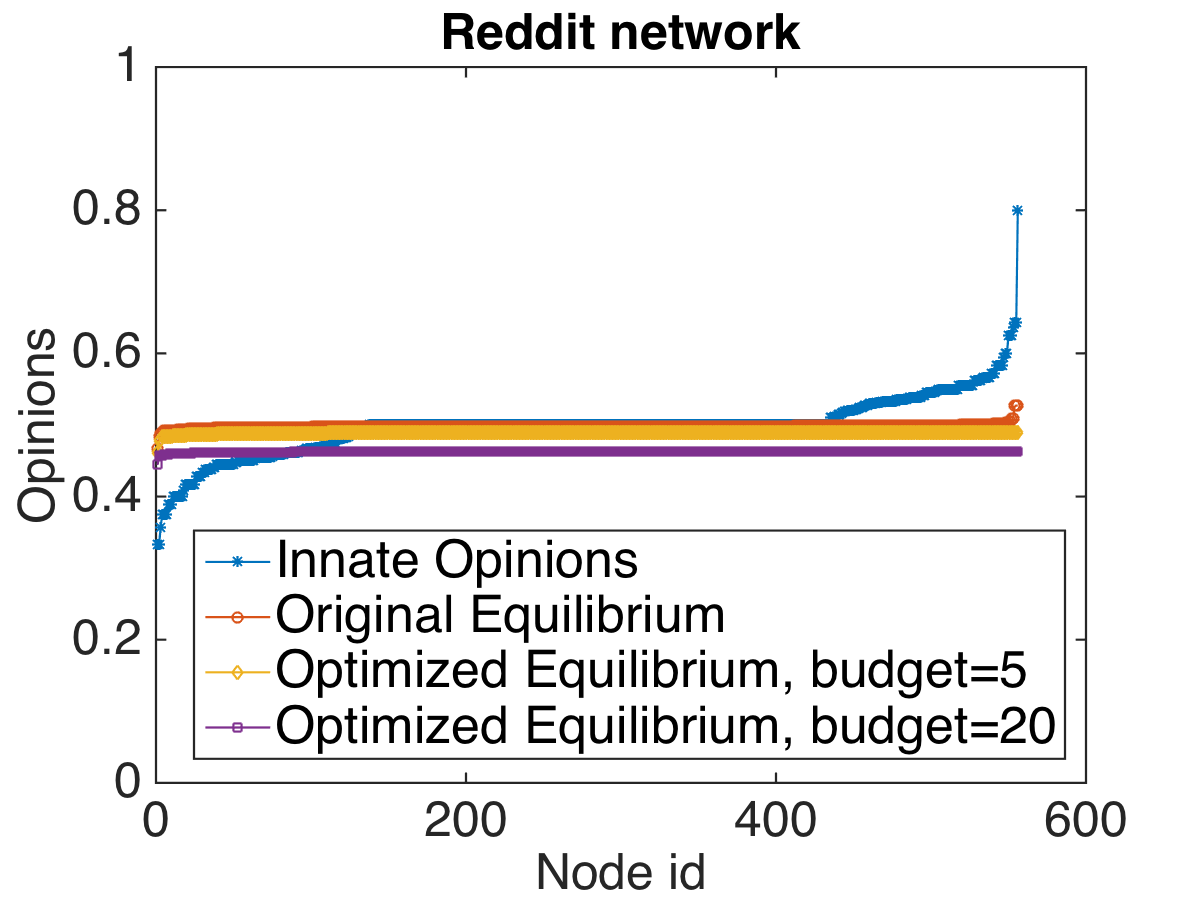} \\
(c) & (d) \\ 
\end{tabular}
\caption{\label{fig:twitterreddit} (a) Twitter network. (b) Reddit network.  (c),(d) 
Opinions per node for Twitter and Reddit respectively. Plots shows the innate opinions (*), the opinions at equilibrium with no innate  opinion intervention (o), and the opinions at equilibrium using an intervention with budget 5 (diamonds), and 20 (squares) respectively. } 
\end{figure*} 

We checked the structure of our optimal graphs, suspecting that two clusters would emerge. Specifically, we centered the innate vector $s$ around the mean, and we obtained two sets of nodes: those with innate opinions below the mean, and those with innate opinions above the mean. Intuitively, small disagreement graphs should have a strong community structure on these two sets of nodes, and small polarization graphs should have a large max cut (edges across the two groups). The optimal graphs do not exhibit such a clear two-block structure, but look closer to being random. 
Combined with our finding on random binomial graphs from the synthetic experiments, it is an interesting question whether random binomial graphs are indeed close to the optimal solution for any initial opinion vector. 

Runtime-wise, the optimization took a couple of hours for each network; observe that the total number of variables grows as $O(n^2)$. It is an open question  to come up with an efficient primal-dual procedure that takes advantage of specific features of our convex programs and Theorem~\ref{sparsifiers} to improve the overall runtime.

We also considered the problem of optimizing the vector of innate opinions using different budget values. We show how the equilibria opinions vectors look before and after intervention. Figures~\ref{fig:twitterreddit}(b)(c) show four  opinion vectors permuted  according to the sorting permutation of $s$ for Twitter and Reddit respectively: the vector of innate opinions $s$ that is sorted, the original equilibrium vector, and two optimized equilibria vectors, when the budget is 5, and 20 respectively.  We observe that nodes' opinions converge closer and closer to the average, achieving significantly lower objective values as the budget increases.

\section{Conclusion}
\label{sec:concl}

Humans tend to prefer links that minimize disagreement due to the well-known confirmation bias, among other reasons. However, this may lead to polarized communities.  In this work we introduce the notion of  the {\em polarization-disagreement index} to understand better this  trade-off between disagreement and polarization in online social networks and online social media. To the best of our knowledge, we provide the first formulation for finding an optimal topology which minimizes the sum of polarization and disagreement under a popular opinion formation model. We prove interesting facts about our objective, including the fact that there is always a graph with $O(n/\epsilon^2)$ edges that provides a $(1+\epsilon)$ approximation to the optimal objective. We also provide efficient procedures to optimize the objective. We complement our results on optimizing graph topologies, by considering a version of the problem where we optimize over innate opinions for a fixed network. Our proposed tools provide useful information about the importance of certain edges and nodes with respect to opinion formation. We provide an extensive empirical study of our results using synthetic data and two real-world datasets (Twitter, and Reddit). 

Our work opens many interesting questions. We used the popular Friedkin-Johnsen opinion formation model, which takes into account both consensus and disagreement in the underlying opinion update process. The same questions we asked here can be also asked using other opinion formation models \cite{mossel2017opinion}.  Our experimental results strongly indicate that Erd\"{o}s-R\'{e}nyi graphs are not far away from achieving the optimal polarization-disagreement index $\mathcal{I}_{G,s}$. Can we prove that Erd\"{o}s-R\'{e}nyi graphs are always a good approximation to the optimal solution? If yes, can we prove a similar result for expanders? Do links that cross communities always improve the polarization-disagreement index, or not?  Another interesting research direction is understanding how well we can approximate  the non-convex objective of Theorem~\ref{nonconvexobjective}, where a weighting is added to the polarization-disagreement index.  Furthermore, we focused on disagreement and polarization, two key phenomena in human societies. Can we develop a similar framework for optimizing other mathematical objectives for social good? 

\section*{Acknowledgements} 
Charalampos Tsourakakis would like to thank his newborn son Eftychios for the happiness he brought to his family.

\bibliographystyle{abbrv}
\bibliography{ref}

\end{document}